\newcommand{\desangle}[2]{\theta^\downarrow_{#2}(#1)}
\newcommand{\desabsangle}[2]{|\theta|^\downarrow_{#2}(#1)}
\newcommand{\unorderangle}[2]{\theta_{#2}^{#1}}
\newcommand{\desbarebra}[2]{\langle\phi^\downarrow_{#2}(#1)}
\newcommand{\desbra}[2]{\desbarebra{#1}{#2}|}
\newcommand{\desket}[2]{|\phi^\downarrow_{#2}(#1)\rangle}
\newcommand{\unorderbarebra}[2]{\langle\phi_{#2}^{#1}}
\newcommand{\unorderbra}[2]{\unorderbarebra{#1}{#2}|}
\newcommand{\unorderket}[2]{|\phi_{#2}^{#1}\rangle}
\newcommand{\vecmu}[1]{\vec{\mu}^{[#1]}}
\newcommand{\enorm}[1]{\nu (#1)_{\vec{\mu}}}
\newcommand{\spcenorm}[2]{\nu (#1)_{\vecmu{#2}}}
\newcommand{\emetric}[2]{d_{\vec{\mu}}(#1,#2)}
\newcommand{\Nenorm}[1]{\nu (#1)^\bigtriangledown_{\vec{\mu}}}
\newcommand{\Nspcenorm}[2]{\nu (#1)^\bigtriangledown_{\vecmu{#2}}}
\newcommand{\Nemetric}[2]{d^\bigtriangledown_{\vec{\mu}}(#1,#2)}
\newcommand{\ecomm}[2]{{\mathfrak C}_{\vec{\mu}}(#1,#2)}
\newcommand{\deseig}[2]{\lambda^\downarrow_{#2}(#1)}
\newcommand{\deseigket}[2]{|\xi^\downarrow_{#2}(#1)\rangle}
\newcommand{\desabseig}[2]{s^\downarrow_{#2}(#1)}
\newtheorem{definition}{Definition}
\newtheorem{lemma}{Lemma}
\newtheorem{theorem}{Theorem}
\newtheorem{corollary}{Corollary}
\newtheorem{remark}{Remark}
\begin{document}

\title{Metrics On Unitary Matrices And Their Application To Quantifying The
 Degree Of Non-Commutativity Between Unitary Matrices}

\author{H.~F. Chau}\email{hfchau@hkusua.hku.hk}
\affiliation{Department of Physics and Center of Computational and Theoretical
 Physics, University of Hong Kong, Pokfulam Road, Hong Kong}
\date{\today}

\begin{abstract}
 By studying the minimum resources required to perform a unitary
 transformation, families of metrics and pseudo-metrics on unitary matrices
 that are closely related to a recently reported quantum speed limit by the
 author are found.  Interestingly, this family of metrics can be naturally
 converted into useful indicators of the degree of non-commutativity between
 two unitary matrices.
\end{abstract}

\pacs{03.65.Aa, 02.10.Yn, 03.65.Ta, 03.67.-a}
\keywords{Eigenvalue Perturbation Theory, Measurement Of Non-commutativity,
 Metrics For Unitary Matrices, Quantum Information Processing}

\maketitle

\section{Introduction \label{Sec:Intro}}
 Quantum information processing is the study of methods and efficiency in
 storage, manipulation and conversion of information represented by quantum
 states.  Many quantum information theoretic concepts are closely related to
 geometry.  For instance, trace distance and fidelity, which come out of the
 study of distinguishability between quantum states, are closely linked with
 Bures and Fubini-Study metrics.  (See, for example,
 Ref.~\cite{geometry_of_states_book} for an overview.)  A few quantum codes can
 be constructed by algebraic-geometric means~\cite{AG_code}.  And finding the
 optimal quantum circuit can be regarded as the problem of finding the shortest
 path between two points in a certain curved
 geometry~\cite{geometric_computing}.
 
 Recently, a few metrics on unitary operators with quantum information
 applications were found.  For example, Johnston and Kribs introduced the $k$th
 operator norm of an operator acting on a bipartite system by considering the
 action of the operator on bipartite states with Schmidt rank less than or
 equal to $k$.  The $k$th operator norm can be used to study bound entanglement
 of Werner states as well as to construct several new entanglements
 witnesses~\cite{qip_operator_norm,further_qip_operator_norm}.  Rastegin
 studied the partitioned trace distance which shares similar properties with
 the standard trace distance~\cite{singular_value_trace_distance}.  The
 partitioned trace distance shines new light on exponential
 indistinguishability and hence can be used to investigate certain quantum
 cryptographic problems~\cite{more_singular_value_trace_distance}.
 Interestingly, both the $k$th operator norm and the partitioned trace distance
 are related to the Ky Fan
 norm~\cite{qip_operator_norm,singular_value_trace_distance}.

 By asking the question about the minimum resources needed to perform a unitary
 transformation (a question of quantum information processing favor), I report
 families of related metrics and pseudo-metrics on the set of unitary matrices
 (a result of geometric nature) in this paper.  In this regard, these matrices
 and pseudo-metrics are very different from trace distance, partitioned trace
 distance and $k$th operator norm as the latter are more closely related to a
 different quantum information processing problem, namely, the
 distinguishability of states and operators.

 An interesting consequence of the discovery of this new family of metrics on
 the set of unitary matrices is that it gives refined measures of the degree of
 non-commutativity between two unitary matrices beyond the standard yes-or-no
 answer.  Remarkably, while quantifying the level of closeness between two
 quantum states can be done by tools such as trace distance and fidelity (see,
 for example, Refs.~\cite{geometry_of_states_book,Nielsen_Chuang} for an
 introduction on this matter), little has been done on the quantification of
 the degree of non-commutativity between two unitary operators.

 Actually, the metrics and pseudo-metrics reported in this paper are
 constructed from certain linear combinations of the absolute values of the
 arguments of eigenvalue of a unitary matrix.  To prove that these
 constructions are indeed metrics and pseudo-metrics, specific inequalities
 concerning the absolute values of the arguments of eigenvalue of the unitary
 matrices $U, V$ and $U V$ have to be established.  The precise statements to
 be proven can be found in Definitions~\ref{Def:form} and~\ref{Def:norm_metric}
 as well as Theorem~\ref{Thrm:angle_bound} below.  Note that various authors
 had shown the validity of a similar inequality, with the weighted sum of the
 absolute values of the arguments of eigenvalue for each unitary matrix being
 replaced by the largest argument of the eigenvalue for that matrix, provided
 that an additional condition constraining the arguments of eigenvalues of $U$
 and $V$ is satisfied~\cite{convex_hull_bound,direct_extension,q_calculus,%
principal_bound}.
 However, it is not clear how to modify their proofs to show the validity of
 the inequalities stated in Theorem~\ref{Thrm:angle_bound}.  In fact, only a
 few results on the relationship between arguments of eigenvalues of unitary
 matrices $U, V$ and $U V$ are known.  Most mathematical works along this
 general direction concentrate on the study of spectral variations of normal
 matrices on the complex plane (rather than arguments of the eigenvalues) as
 well as relations between the eigenvalues of Hermitian matrices $H_1, H_2$ and
 $H_1 + H_2$ (known as the Weyl's problem).  (See, for example,
 Refs.~\cite{matrix_analysis_book,eigenvalue_book,Weyl_review} for
 comprehensive surveys.)  Besides, techniques used to tackle the Weyl's
 problem, such as the min-max principle stated in Sec.~III.1 of
 Ref.~\cite{matrix_analysis_book}, cannot be easily adapted to the case of
 unitary matrices.  In this regard, the proof of Theorem~\ref{Thrm:angle_bound}
 is also of mathematical interest.

 I begin by asking what is the minimum resources needed to perform a unitary
 transformation in Sec.~\ref{Sec:Background}.  Motivated by the result of this
 quantum information theoretic question, I define two closely related families
 of binary operations on the set of unitary matrices $\emetric{\cdot}{\cdot}$
 and $\Nemetric{\cdot}{\cdot}$, where $\vec{\mu}$ is a real-valued vector
 satisfying a technical condition, in Sec.~\ref{Sec:Def}.  Then, in
 Sec.~\ref{Sec:Properties}, I prove that these two families of binary
 operations are families of metrics and pseudo-metrics on the set of all
 unitary matrices, respectively.  In Sec.~\ref{Sec:Non-Commutativity}, I apply
 the metric introduced in Sec.~\ref{Sec:Def} to measure the degree of
 non-commutativity between two unitary matrices.  Finally, I give a summary and
 outlook in Sec.~\ref{Sec:Discuss}.

\section{Minimum Resources Needed To Perform A Unitary Transformation
 \label{Sec:Background}}
 Two commonly studied problems in the field of quantum information processing
 are the maximum efficiency of a particular quantum information processing
 operation as well as the minimum resources needed to carry out such operation
 allowed by the known laws of nature~\cite{phy_limit}.  Since a unitary
 operator $U$, whose eigenvalues can be written in the form $e^{i\theta_j}$'s
 with $\theta_j \in (-\pi,\pi]$, is the result of time evolution of an
 Hamiltonian $H$, one may naively use the values of $|\theta_j|$'s as
 indicators of the minimum resources (in terms of the product of the average
 energy of the system and the evolution time needed) required to implement $U$.
 Nonetheless, this idea has to be polished as there is no physical meaning for
 the reference energy level and the overall phase of a unitary operator has no
 effect when applied to a density matrix.

 To refine the above idea, I use the following result.

\begin{definition}
 Let $H$ be an $n$-dimensional Hermitian matrix.  I follow the notation in
 Ref.~\cite{matrix_analysis_book} by denoting the eigenvalues and singular
 values of $H$ arranged in descending order by $\deseig{H}{j}$'s and
 $\desabseig{H}{j}$'s respectively, where the index $j$ runs from $1$ to $n$.
 I denote the normalized eigenvector of $H$ with eigenvalue $\deseig{H}{j}$ by
 $\deseigket{H}{j}$.
 \label{Def:Hermitian_form}
\end{definition}

 Recently, I showed that given a time-independent Hamiltonian $H$ and a
 normalized initial state $|\phi\rangle = \sum_j \alpha_j \deseigket{H}{j}$,
 the time $\tau$ needed to evolve $|\phi\rangle$ to a state in its orthogonal
 subspace satisfies~\cite{preprint}
\begin{equation}
 \tau \geq \frac{\hbar}{A \sum_j |\alpha_j|^2 |\deseig{H}{j} - x|}
 \label{E:absolute1}
\end{equation}
 for any $x\in {\mathbb R}$.   Here $A$ is a universal positive constant
 independent of $H$ and $|\phi\rangle$.  Note that $\sum_j |\alpha_j|^2
 |\deseig{H}{j} - x|$ is minimized when $x$ equals the median energy of the
 system $M$.  Recall that the median energy $M$ satisfies
\begin{subequations}
\begin{equation}
 \sum_{j\colon \deseig{H}{j} \geq M} |\alpha_j|^2 \geq \frac{1}{2}
 \label{E:Median_def1}
\end{equation}
 and
\begin{equation}
 \sum_{j\colon \deseig{H}{j} \leq M} |\alpha_j|^2 \geq \frac{1}{2} .
 \label{E:Median_def2}
\end{equation}
\end{subequations}
 So, $M$ need not be unique; and any $M$ obeying the above two equations can
 minimize $\sum_j |\alpha_j|^2 |\deseig{H}{j} - x|$.  From
 Eq.~(\ref{E:absolute1}), I get
\begin{equation}
 \tau \geq \frac{\hbar}{A \sum_j |\alpha_j|^2 |\deseig{H}{j} - M|} \equiv
 \frac{\hbar}{A {\mathscr D}E(H,|\phi\rangle)}
 \label{E:absolute2}
\end{equation}
 where ${\mathscr D}E(H,|\phi\rangle)$ is the so-called average absolute
 deviation from the median of the energy of the system.  Since
 Eq.~(\ref{E:absolute2}) turns out to be the best possible lower bound on
 $\tau$ given only ${\mathscr D}E(H,|\phi\rangle)$, I interpreted
 ${\mathscr D}E(H,|\phi\rangle)$ as an indicator of the maximum quantum
 information processing rate of a system~\cite{preprint}.

 The above result can be used to prove the theorem below which quantifies the
 resources needed to perform a unitary transformation.  The detailed proof can
 be found in Appendix~\ref{Sec:proof_of_interpretation}.

\begin{theorem}
 Let $( \alpha_j )_{j=1}^n$ be a sequence of complex numbers obeying
 $\sum_{j=1}^n |\alpha_j|^2 = 1$ and $|\alpha_1|^2 \geq |\alpha_2|^2 \geq
 \cdots \geq |\alpha_n|^2$.  And let $U$ be a given $n\times n$ unitary matrix.
 Then
 \begin{widetext}
 \begin{equation}
  R = \min_{x\in [0,2\pi)} \quad \min_{H t \colon \exp ( - i H t / \hbar) =
  e^{i x} U} \quad \max_{|\phi\rangle \in C(H,(\alpha_j))} \frac{A
  {\mathscr D}E(H,|\phi\rangle) t}{\hbar}
  \label{E:min_max_interpretation}
 \end{equation}
 \end{widetext}
 exists, where $C(H,(\alpha_j))$ is the set of normalized state kets in the
 form $\sum_{j=1}^n \alpha_j \deseigket{H}{P(j)}$ for some permutation $P$ of
 $\{ 1,2,\cdots ,n \}$.  Moreover, the $H$ and $|\phi\rangle$ that attain the
 extremum in Eq.~(\ref{E:min_max_interpretation}) can be chosen to have $0$
 median system energy and $\deseig{H}{j} t / \hbar \in (-\pi,\pi]$ for all $j$.
 In particular, there exists a Hamiltonian $H$ such that
 \begin{equation}
  R = \frac{A t \sum_{j=1}^n |\alpha_j|^2 \desabseig{H}{j}}{\hbar} .
  \label{E:min_max_expression}
 \end{equation}
 \label{Thrm:interpretation}
\end{theorem}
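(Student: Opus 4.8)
The plan is to exploit the variational bound of Eq.~(\ref{E:absolute2}) and collapse the triple extremum in Eq.~(\ref{E:min_max_interpretation}) onto the eigenphases of $H$. First I would diagonalise $U$ and note that $\exp(-iHt/\hbar)=e^{ix}U$ forces $H$ to share eigenvectors with $U$; writing the eigenvalues of $U$ as $e^{i\theta_k}$ with $\theta_k\in(-\pi,\pi]$, the eigenvalues of $H$ obey $\deseig{H}{k}\,t/\hbar\equiv-(\theta_k+x)\pmod{2\pi}$. Hence the only remaining freedom is $x$, an integer branch $m_k$ for each eigenvector, and the overall scale $t$. Setting $\beta_k=\deseig{H}{k}\,t/\hbar=-(\theta_k+x)+2\pi m_k$, Eq.~(\ref{E:absolute2}) turns the extremand into $A\sum_k p_k\,|\beta_k-\tilde M|$, where $p_k$ is the weight $|\phi\rangle$ places on $\deseigket{H}{k}$ and $\tilde M$ is the associated median; this expression depends on $t$ only through the $\beta_k$, so $t$ is a redundant scale and both sides of Eq.~(\ref{E:min_max_expression}) are in fact $t$-independent.

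For the existence of $R$ I would establish coercivity: sending any branch $m_k\to\pm\infty$ drives some $\beta_k\to\pm\infty$, and since the inner $\max$ over $|\phi\rangle$ includes the permutation that places the heaviest weight $|\alpha_1|^2>0$ on that eigenvector, the mean absolute deviation $\sum_k p_k|\beta_k-\tilde M|$ then diverges. Thus the outer minimisation may be restricted to finitely many branches; together with $x$ on the compact circle $[0,2\pi)$ and the finite permutation group, the continuous objective of Eq.~(\ref{E:min_max_interpretation}) attains its nested $\min$--$\min$--$\max$ on a compact set.

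The core difficulty is the inner maximisation over $|\phi\rangle\in C(H,(\alpha_j))$, i.e.\ over the assignment of the ordered weights $|\alpha_1|^2\ge\cdots\ge|\alpha_n|^2$ to the phases $\beta_k$. Writing $\sum_k p_k|\beta_k-\tilde M|=\min_y\sum_k p_k|\beta_k-y|$ exhibits it as a weighted mean absolute deviation, whose dependence on the assignment I would control by a rearrangement/exchange argument: swapping the weights on two eigenvectors changes the objective by a sign-definite amount that selects a monotone pairing between weights and phase magnitudes, collapsing $\max_P$ onto a single sum of the form appearing in Eq.~(\ref{E:min_max_expression}). The step I expect to be the main obstacle is that the median $\tilde M$ is \emph{not} fixed during such an exchange---it migrates across the defining inequalities~(\ref{E:Median_def1})--(\ref{E:Median_def2}) as the weights are permuted---so the rearrangement inequality cannot be applied term by term; the exchange argument must simultaneously bound the induced shift of $\tilde M$, and I anticipate this is where the bulk of the technical work lies.

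Finally I would settle the two normalisation statements. Because $\sum_k p_k|\beta_k-\tilde M|$ is invariant under a common translation $\beta_k\mapsto\beta_k-c$ (which carries $\tilde M\mapsto\tilde M-c$), and because varying $x$ realises precisely such a uniform shift, the extremising configuration can be translated to median energy $0$ without altering $R$, proving the zero-median claim. For the branch claim I would show that lifting any $\beta_k$ outside a single width-$2\pi$ window strictly enlarges the spread and hence the inner maximum, so the outer minimum is realised with every $\deseig{H}{j}\,t/\hbar\in(-\pi,\pi]$. With $\tilde M=0$ one has $|\beta_k|=\desabseig{H}{k}\,t/\hbar$, and feeding the monotone pairing of the previous paragraph into this median-zero, principal-branch configuration identifies the extremum with $A t\sum_j|\alpha_j|^2\,\desabseig{H}{j}/\hbar$, which is Eq.~(\ref{E:min_max_expression}).
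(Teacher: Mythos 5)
Your outline defers its central step, and that step, in the form you state it, is not merely hard---it is false. You propose that an exchange argument will show that, for the fixed $H$ at hand, the inner maximization of the median-referenced deviation over assignments ``collapses onto a single sum of the form appearing in Eq.~(\ref{E:min_max_expression})'', i.e.\ onto the monotone pairing $\sum_j |\alpha_j|^2 \desabseig{H}{j}$. Take $n=2$, $|\alpha_1|^2 = 0.7$, $|\alpha_2|^2 = 0.3$, and $H$ with eigenvalues $0$ and $\hbar/t$. Whichever eigenvector carries the weight $0.7$ becomes the median energy, so \emph{both} assignments give ${\mathscr D}E(H,|\phi\rangle) = 0.3\,\hbar/t$, whereas $\sum_j |\alpha_j|^2 \desabseig{H}{j} = 0.7\,\hbar/t$. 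The discrepancy is structural: whenever $|\alpha_1|^2 > 1/2$ the heaviest weight pins the median to whatever eigenvalue it occupies, so in the maximizing assignment the largest weight always multiplies a deviation of \emph{zero}---the exact opposite of a monotone pairing. No exchange argument can repair this, because the two quantities you want to identify differ for every permutation at this fixed $H$; any identification with Eq.~(\ref{E:min_max_expression}) can only emerge after the outer minimization over $x$ and the branch choices, which is precisely the interplay you leave open. So the proposal has a genuine gap at its core, and it is not a routine technicality.

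The paper's proof is organized specifically to avoid ever maximizing the median-referenced quantity at fixed $H$. It introduces the auxiliary objective $\langle\phi|\sqrt{H^\dag H}|\phi\rangle$ in Eq.~(\ref{E:alt_min_max_interpretation})---the deviation from the \emph{fixed} reference point $0$---for which the rearrangement Lemma~\ref{Lem:dot_product} evaluates the inner maximum exactly as $\sum_j |\alpha_j|^2 \desabseig{H}{j}$, with no moving median anywhere; it then proves existence of that auxiliary min--min--max (folding eigenvalues into $(-\pi\hbar/t,\pi\hbar/t]$ never increases the maximum, the admissible spectrum is unique for each $x$, and the resulting function of $x$ is continuous and $2\pi$-periodic), argues that its optimizer can be taken to have median energy $0$, and only then transfers back to $R$ via the pointwise inequality ${\mathscr D}E(H,|\phi\rangle) \leq \langle\phi|\sqrt{H^\dag H}|\phi\rangle$, with equality at median $0$. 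Your observation that the median-referenced objective is blind to $x$ (translation invariance) is a genuine simplification the paper does not use, but it does not touch the crux. I add one warning: the pinned-median phenomenon above is a real obstruction, not just to your sketch---the paper's own median-$0$ step shifts $H$ at fixed $|\phi\rangle$ and ignores that the maximizing $|\phi\rangle$ re-optimizes, and in the regime $|\alpha_1|^2 > 1/2$ of the example above $\max_{|\phi\rangle} {\mathscr D}E(H,|\phi\rangle)$ and $\sum_j |\alpha_j|^2 \desabseig{H}{j}$ differ for \emph{every} admissible $H$. So your instinct about where the technical work lies is sound; but a proof that leaves that work undone, and whose announced resolution of it cannot hold, is not yet a proof.
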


 Theorem~\ref{Thrm:interpretation} can be understood physically as follows.
 Recall that ${\mathscr D}E(H,|\phi\rangle)$ is an indicator of the maximum
 quantum information processing rate.  Since $U = e^{-i H t / \hbar}$ up to an
 overall phase, one may implement $U$, for instance, by evolving the system
 with a slow quantum information processing rate $H$ for a long time or
 alternatively by evolving the system with a fast quantum information
 processing rate $H$ for a short time.  In this respect, it is the product of
 ${\mathscr D}E(H,|\phi\rangle)$ and evolution time $t$ that characterizes the
 resources needed to implement $U$.  More precisely, the value of $R$ defined
 by Eq.~(\ref{E:min_max_interpretation}) in Theorem~\ref{Thrm:interpretation}
 can be regarded as an indicator of the least amount of resources needed to
 carry out the unitary transformation $U$ on the worst possible $|\phi\rangle$
 in the form $\sum \alpha_j \deseigket{H}{P(j)}$ for some permutation $P$ of
 $\{1,2,\ldots ,n\}$.  The larger the value of $R$, the more the average
 absolute deviation from the median of the energy of the system times the
 evolution time is needed to carry out the transformation on the worst possible
 $|\phi\rangle$.

\section{Defining Families Of Quantum Information Theory Inspired Metrics And
 Pseudo-metrics On The Set Of Unitary Matrices \label{Sec:Def}}
 Based on the quantum information theoretic analysis in
 Sec.~\ref{Sec:Background} and the fact that $R$ in
 Eqs.~(\ref{E:min_max_interpretation}) and~(\ref{E:min_max_expression}) are
 essentially a weighted sum of singular values of the operator $i\log U - x I$
 for some $x\in {\mathbb R}$, I propose a number of measures of the minimum
 amount of the average absolute deviation from the median of the energy of the
 system times the evolution time required to implement a unitary operation.  I
 begin by introducing a few notations.

\begin{definition}
 Let $U$ be an $n$-dimensional unitary matrix.  Generalizing the convention
 adopted in Ref.~\cite{matrix_analysis_book}, I denote the arguments (from now
 on, all arguments in this paper are in principal values unless otherwise
 stated) of the eigenvalues of $U$ arranged in descending order by
 $\desangle{U}{j}$'s (where the index $j$ runs from $1$ to $n$).  That is to
 say, $U = \sum_j e^{i \desangle{U}{j}} \desket{U}{j}\desbra{U}{j}$ where
 $\desangle{U}{j} \in (-\pi,\pi]$ and $\desket{U}{j}$ is a normalized
 eigenvector of $U$ with eigenvalue $e^{i \desangle{U}{j}}$.  Similarly, I
 denote by $\desabsangle{U}{j}$'s the absolute values of the arguments of the
 eigenvalue of $U$ arranged in descending order.  Occasionally, I need to refer
 to the arguments of the eigenvalues of $U$ without any specific order using
 the notation $\unorderangle{U}{j}$'s.  And in this case, I denote the
 corresponding normalized eigenvector by $\unorderket{U}{j}$.
 \label{Def:form}
\end{definition}

 For example, let $U = \left[ \begin{array}{cc} 1 & 0 \\ 0 & -i \end{array}
 \right]$.  Then, $\desangle{U}{1} = 0$, $\desangle{U}{2} = -\pi/2$,
 $\desabsangle{U}{1} = \pi/2$ and $\desabsangle{U}{2} = 0$.

 Note further that for any $n$-dimensional unitary matrix $U$,
\begin{subequations}
\begin{equation}
 \desangle{U^{-1}}{j} = - \desangle{U}{n-j+1} ,
 \label{E:inverse_U}
\end{equation}
\begin{equation}
 \desabsangle{U^{-1}}{j} = \desabsangle{U}{j} ,
 \label{E:inverse_U_abs}
\end{equation}
and
\begin{equation}
 \desangle{U}{j} \leq \desabsangle{U}{j} .
 \label{E:des_desabs}
\end{equation}
\end{subequations}

\begin{definition}
 Let $a \in {\mathbb R}$ and $U = \sum_j e^{i \desangle{U}{j}}
 \desket{U}{j}\desbra{U}{j}$ be a unitary matrix.  Then, $U^a$ is defined to be
 the unitary matrix $\sum_j e^{i a \desangle{U}{j}}
 \desket{U}{j}\desbra{U}{j}$.
 \label{Def:matrix_power}
\end{definition}

\begin{definition}
 Let $U,V$ be two $n$-dimensional unitary matrices.  Let $\vec{\mu} = (\mu_1,
 \mu_2,\ldots ,\mu_n) \neq \vec{0}$ with $\mu_1 \geq \mu_2 \geq \cdots \geq
 \mu_n \geq 0$.  I define
 \begin{equation}
  \enorm{U} = \sum_{j=1}^n \mu_j \desabsangle{U}{j} ,
  \label{E:norm_def}
 \end{equation}
 \begin{equation}
  \emetric{U}{V} = \enorm{U V^{-1}} ,
  \label{E:metric_def}
 \end{equation}
 \begin{equation}
  \Nenorm{U} = \min_{x\in {\mathbb R}} \enorm{e^{i x} U} ,
  \label{E:Nnorm_def}
 \end{equation}
 and
 \begin{equation}
  \Nemetric{U}{V} = \min_{x\in {\mathbb R}} \emetric{e^{i x} U}{V} .
  \label{E:Nmetric_def}
 \end{equation}
 \label{Def:norm_metric}
\end{definition}

 I am going to prove in Sec.~\ref{Sec:Properties} that $\emetric{\cdot}{\cdot}$
 and $\Nemetric{\cdot}{\cdot}$ are a metric and pseudo-metric, respectively.
 This justifies the use of the notations.  On the other hand, $\enorm{\cdot}$
 and $\Nenorm{\cdot}$ are not a norm or pseudo-norm.  This is because $U + V$
 may not be unitary so that $\enorm{U + V}$ and $\Nenorm{U +V}$ are
 ill-defined.  Hence, it does not make sense to talk about validity of the
 (additive) triangle inequality for $\enorm{\cdot}$ and $\Nenorm{\cdot}$.
 Nevertheless, I am going to prove in Sec.~\ref{Sec:Properties} that
 $\enorm{\cdot}$ and $\Nenorm{\cdot}$ satisfy the multiplicative triangle
 inequality.

 Note further that $\enorm{U}$ is a weighted sum of the singular values of a
 certain infinitesimal generator of the unitary operator $U$.  And from
 Theorem~\ref{Thrm:interpretation}, I know that $\Nenorm{U}$ is indeed a
 measure of the minimum average absolute deviation from the median energy of
 the system times the evolution time needed to perform the unitary
 transformation $U$ in the sense that
\begin{equation}
 \Nenorm{U} = \min_{x\in [0,2\pi)} \quad \min_{H t \colon \exp ( -i H t /
 \hbar) = e^{i x} U} \quad \max {\mathscr D} E(H,|\phi\rangle) \ t ,
 \label{E:Nenorm_interpretation}
\end{equation}
 where the maximum is taken over all state kets $|\phi\rangle$ in the form
 $\sum_{j=1}^n \alpha_j \deseigket{H}{P(j)}$ with $|\alpha_j|^2 = \mu_j /
 \sum_k \mu_k$ and $P$ is a permutation of $\{ 1,2,\ldots,n\}$.  In contrast,
 $\enorm{U}$ can be regarded as an ``unoptimized'' version of $\Nenorm{U}$ in
 the sense that
\begin{equation}
 \enorm{U} = \min_{H t \colon \exp ( -i H t / \hbar) = U} \quad \max
 {\mathscr D}E(H,|\phi\rangle) \ t ,
 \label{E:enorm_interpretation}
\end{equation}
 where the maximum is taken over the same set as in the maximum in
 Eq.~(\ref{E:Nenorm_interpretation}).  In fact, the global phase of $U$, which
 has no physical meaning, affects the value of $\enorm{U}$ but not
 $\Nenorm{U}$.  Nevertheless, readers will find out in
 Sec.~\ref{Sec:Non-Commutativity} that $\enorm{\cdot}$ is useful in studying
 the degree of non-commutativity between two unitary operators.

 Three important points are stated.  First, one should investigate the
 properties of Eqs.~(\ref{E:norm_def})--(\ref{E:Nmetric_def}) for a general
 $\vec{\mu}$ in order to get a more complete picture on the minimum resources
 needed to evolve different kind of initial states $|\phi\rangle$'s by the
 unitary operator $U$.

 Second, an inequality giving a lower evolution time bound like the one stated
 in Eq.~(\ref{E:absolute2}) is sometimes known as a quantum speed
 limit~\cite{phy_limit}.  In addition to Eq.~(\ref{E:absolute2}), the other two
 important quantum speed limits are the so-called time-energy uncertainty
 relation $\tau \geq \pi\hbar / 2 \Delta E$ where $\Delta E$ is the standard
 deviation of the energy of the system and the Margolus-Levitin bound $\tau
 \geq \pi\hbar / 2(E - E_0)$ where $E$ and $E_0$ are the average energy and
 ground state energy of the system, respectively~\cite{PHYSCOMP96,max_speed}.
 One may define similar measures of the minimum resources needed to implement a
 unitary operator $U$ based on these two quantum speed limits.  Unfortunately,
 it is not very fruitful to pursue in this direction because these measures are
 not metrics.  (Note that the absence of metric structures here only refers to
 the measure of minimum resources required to perform a unitary operation based
 on these quantum speed limits.  It does not mean that the two quantum speed
 limits have no geometrical meaning.  In fact, the time-energy uncertainty
 relation itself is closely related to the Bures metric~\cite{Uhlmann}.)

 Third, $\Nenorm{\cdot}$ is related to the distance between two quantum states
 as follows.  Suppose $|\Psi_1\rangle$ and $|\Psi_2\rangle$ are two $n$-leveled
 pure states.  Recall that the Bures angle between them is $\chi = \left|
 \langle\Psi_1 | \Psi_2\rangle \right|$.  Then, amongst all the unitary
 transformations $U$ obeying $U |\Psi_1\rangle = |\Psi_2\rangle$, the one that
 minimizes $\Nenorm{U}$ satisfies $\desangle{U}{1} = \chi$, $\desangle{U}{n} =
 -\chi$ and $\desangle{U}{j} = 0$ for $1 < j < n$.  (One quick way to see this
 is that the required $U$ only need to perform rotation on the Hilbert space
 spanned by $|\Psi_1\rangle$ and $|\Psi_2\rangle$.  Then the problem can be
 handled by standard minimizing techniques.)

\section{Properties Of $\enorm{\cdot}$, $\emetric{\cdot}{\cdot}$,
 $\Nenorm{\cdot}$ And $\Nemetric{\cdot}{\cdot}$ \label{Sec:Properties}}
 Since $\vec{\mu} \equiv (\mu_1,\mu_2,\cdots ,\mu_n) = \sum_{j=1}^{n-1} (\mu_j
 - \mu_{j+1}) \vecmu{j} + \mu_n \vecmu{n}$ where
\begin{equation}
 \vecmu{j} = ( \overbrace{1,1, \cdots, 1}^{j \textrm{~entries}},0,0,\cdots ,0)
 ,
 \label{E:lambda_basic_def}
\end{equation}
 I conclude that 
\begin{subequations}
\begin{equation}
 \enorm{U} = \sum_{j=1}^{n-1} (\mu_j - \mu_{j+1}) \spcenorm{U}{j} + \mu_n
 \spcenorm{U}{n}
 \label{E:enorm_decomposition}
\end{equation}
 and
\begin{equation}
 \Nenorm{U} \geq \sum_{j=1}^{n-1} (\mu_j - \mu_{j+1}) \Nspcenorm{U}{j} + \mu_n
 \Nspcenorm{U}{n} .
 \label{E:Nenorm_decomposition}
\end{equation}
\end{subequations}
 Furthermore, the coefficients in the R.H.S. of the above two equations are
 non-negative.  It turns out that one can deduce many properties of
 $\enorm{\cdot}$ from the properties of $\spcenorm{\cdot}{j}$'s.

 Here I list a few basic properties of $\enorm{\cdot}$, $\Nenorm{\cdot}$,
 $\emetric{\cdot}{\cdot}$ and $\Nemetric{\cdot}{\cdot}$.  The proofs are left
 to the readers as they are straightforward.  (Note that all $U$'s and $V$'s
 appear below are $n$-dimensional unitary matrices.)
\begin{itemize}
 \item $\enorm{\cdot}$ and $\Nenorm{\cdot}$ are both functions from $U(n)$, the
  set of all $n\times n$ unitary matrices, to $[0,\pi \sum_{j=1}^n \mu_j]$.  In
  fact, $\enorm{\cdot}$ is a surjection with $\enorm{U} = \pi \sum_j \mu_j$ for
  all $\vec{\mu}$ if and only if $U = -I$.
 \item $\enorm{U} = 0$ and $\Nenorm{U} = 0$ if and only if $U = I$ and $U =
  e^{i x} I$ for some $x\in {\mathbb R}$, respectively.
 \item $\Nenorm{e^{i x} U} = \Nenorm{U}$ for all $x\in {\mathbb R}$.
 \item $\enorm{U^{-1}} = \enorm{U} = \enorm{V U V^{-1}}$ and $\Nenorm{U^{-1}} =
  \Nenorm{U} = \Nenorm{V U V^{-1}}$.
 \item $\nu (U)_{a\vec{\mu}}  = a \enorm{U}$ and $\nu (U)^\bigtriangledown_{a
  \vec{\mu}} = a \Nenorm{U}$ for all $a > 0$.
 \item $\enorm{U^b} \leq |b| \enorm{U}$ and $\Nenorm{U^b} \leq |b| \Nenorm{U}$
  for all $b\in {\mathbb R}$.  Moreover, the equalities hold if $|b|
  \desabsangle{U}{1} \leq \pi$.
 \item $\Nspcenorm{U}{2} = 2\Nspcenorm{U}{1}$ and $\Nspcenorm{U}{2j+1} =
  \Nspcenorm{U}{2j}$ whenever $2j+1 \leq n$.
 \item If $U(t)$ is a continuous one-parameter family of unitary matrices,
  then $\enorm{U(t)}$ and $\Nenorm{U(t)}$ are continuous.  This result is more
  involved.  By the theorem that roots of a polynomial vary continuously with
  its coefficients~\cite{polynomial_roots,polynomial_book}, eigenvalues of
  $U(t)$ must vary continuously on the unit circle.  And as the absolute value
  of argument of a complex-valued and nowhere zero function is continuous,
  $\enorm{U(t)}$ is continuous.  The continuity of $\Nenorm{U(t)}$ then follows
  the fact that the pointwise minimum of a collection of continuous functions
  is also a continuous function if it exists.
 \item If $H(t)$ is an $n$-dimensional Hamiltonian and $U(t)$ is the unitary
  operator generated by $H(t)$, then
  \begin{equation}
   \left. \frac{d \enorm{U(t)}}{dt} \right|_{t=0} = \sum_{j=1}^n \mu_j
   \desabseig{H(0)}{j} .
   \label{E:denormdt}
  \end{equation}
  This shows the relation between rate of change of $\enorm{U(t)}$ and the
  singular values of its generator $H(t)$.  Actually, the R.H.S. of
  Eq.~(\ref{E:denormdt}) is the so-called generalized spectral
  norm~\cite{generalized_Ky_Fan_norm} of the Hamiltonian $H(0)$.
 \item $\emetric{\cdot}{\cdot}$ and $\Nemetric{\cdot}{\cdot}$ are positive
  definite and positive semi-definite functions, respectively.  And they are
  both symmetric functions.
 \item $\emetric{U}{V} = \pi \sum_j \mu_j$ for all $\vec{\mu}$ if and only if
  $V = -U$.
 \item Let $U_i$ be $n_i$-dimensional unitary matrices, then $\spcenorm{U_1
  \otimes U_2}{1} \leq \spcenorm{U_1}{1} + \spcenorm{U_2}{1}$, $\Nspcenorm{U_1
  \otimes U_2}{1} \leq \Nspcenorm{U_1}{1} + \Nspcenorm{U_2}{1}$, $\spcenorm{U_1
  \otimes U_2}{n_1 n_2} \leq n_2 \spcenorm{U_1}{n_1} + n_1 \spcenorm{U_2}{n_2}$
  and $\Nspcenorm{U_1 \otimes U_2}{n_1 n_2} \leq n_2 \Nspcenorm{U_1}{n_1} + n_1
  \Nspcenorm{U_2}{n_2}$.  (Note that I have abused the notation a little bit as
  clearly the lengths of the vectors $\vecmu{1},$ $\vecmu{n_1},$ $\vecmu{n_2},$
  $\vecmu{n_1 n_2}$ in the above inequalities are different even though each of
  them has the same number of non-zero entries.)
\end{itemize}

 Interestingly, except for the second property (the necessary and sufficient
 condition for $\enorm{\cdot}$ and $\Nenorm{\cdot}$ to be zero) and the last
 property (concerning the tensor product of unitary matrices), all the above
 properties require only $\mu_j \geq 0$ for all $j$.  In contrast, the
 condition $\mu_1 \geq \mu_2 \geq \cdots \geq \mu_n \geq 0$ is required for
 $\emetric{\cdot}{\cdot}$ and $\emetric{\cdot}{\cdot}$ to satisfy the triangle
 inequality.  In fact, the triangle inequality follows directly from
 Eq.~(\ref{E:metric_def}) and the theorem below.  And this theorem alone is
 also of great interest as it bounds certain weighted sums of the arguments of
 eigenvalues for the product of two unitary matrices.

\begin{theorem}
 Let $U, V$ be two $n$-dimensional unitary matrices and $\vec{\mu}$ satisfying
 the requirements stated in Definition~\ref{Def:norm_metric}.  Then
 \begin{equation}
  \left| \enorm{U} - \enorm{V} \right| \leq \enorm{U V} \leq \enorm{U} +
  \enorm{V} .
  \label{E:UV_inequality}
 \end{equation}
 \label{Thrm:angle_bound}
\end{theorem}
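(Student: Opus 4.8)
The plan is to deduce the two-sided estimate from the single subadditivity inequality $\enorm{UV} \le \enorm{U} + \enorm{V}$, and then to prove subadditivity by viewing $\enorm{\cdot}$ as a length functional on $U(n)$.

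The left-hand inequality is a formal consequence of subadditivity. Granting $\enorm{XY} \le \enorm{X} + \enorm{Y}$ for all unitaries, I put $X = UV$, $Y = V^{-1}$ to get $\enorm{U} \le \enorm{UV} + \enorm{V^{-1}} = \enorm{UV} + \enorm{V}$, using $\desabsangle{V^{-1}}{j} = \desabsangle{V}{j}$ from Eq.~(\ref{E:inverse_U_abs}); interchanging $U$ and $V$ gives $\enorm{V} - \enorm{U} \le \enorm{UV}$ as well, and the two together are exactly $|\enorm{U} - \enorm{V}| \le \enorm{UV}$. Thus the whole theorem reduces to subadditivity.

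For subadditivity, let $A$ and $B$ be the Hermitian matrices with spectra in $(-\pi,\pi]$ satisfying $e^{iA} = U$ and $e^{iB} = V$, so that $\desabseig{A}{j} = \desabsangle{U}{j}$ and $\desabseig{B}{j} = \desabsangle{V}{j}$, whence $\enorm{U} = \sum_j \mu_j \desabseig{A}{j}$ and $\enorm{V} = \sum_j \mu_j \desabseig{B}{j}$. I would then join $I$ to $UV$ by the path $W(t)$ equal to $e^{itA}$ on $t \in [0,1]$ and to $U e^{i(t-1)B}$ on $t \in [1,2]$. The crux is the differential inequality
\begin{equation}
 \frac{d}{dt}\enorm{W(t)} \le \sum_{j=1}^n \mu_j \desabseig{G(t)}{j}, \qquad G(t) \equiv -i\,\dot W(t)\,W(t)^{-1},
 \label{E:diffineq}
\end{equation}
proved below. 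Along the first leg $G(t) = A$ and along the second $G(t) = U B U^{-1}$, whose singular values equal those of $B$; thus the right-hand side integrates to $\enorm{U}$ over $[0,1]$ and to $\enorm{V}$ over $[1,2]$. Since $\enorm{\cdot}$ is Lipschitz on $U(n)$ (the eigenvalues of a unitary vary Lipschitz-continuously with the matrix and $\lambda \mapsto |\arg\lambda|$ is Lipschitz along the unit circle), $\enorm{W(t)}$ is absolutely continuous, so integrating~(\ref{E:diffineq}) gives $\enorm{UV} \le \enorm{W(0)} + \enorm{U} + \enorm{V} = \enorm{U} + \enorm{V}$. (Equivalently, one could verify the Ky Fan--type bounds $\sum_{j\le k}\desabsangle{UV}{j} \le \sum_{j\le k}\desabsangle{U}{j} + \sum_{j\le k}\desabsangle{V}{j}$ for each $k$ and recombine them with the non-negative coefficients of Eq.~(\ref{E:enorm_decomposition}); the argument below handles all admissible $\vec{\mu}$ at once.)

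Establishing~(\ref{E:diffineq}) is the main obstacle and the step that genuinely needs eigenvalue perturbation theory. Diagonalising $W = W(t)$ and perturbing by the Hermitian generator $G$, first-order perturbation theory shows that $\desangle{W}{j}$ moves at rate $\langle\phi|G|\phi\rangle$ on the corresponding eigenvector; inside any degenerate eigenspace I would pick the eigenbasis diagonalising the compression of $G$, so that these rates are exactly the diagonal entries $g_{jj}$ of $G$ in the eigenbasis of $W$. Hence, off the measure-zero set of instants where an argument meets $0$ or $\pm\pi$ or two of the $|\desangle{W}{j}|$ tie, $\frac{d}{dt}\enorm{W} = \sum_j c_j\, g_{jj}$, where $c_j = \mu_{\sigma(j)}\operatorname{sgn}(\desangle{W}{j})$ and $\sigma$ ranks the $|\desangle{W}{j}|$ in descending order, so the multiset $\{|c_j|\}$ equals $\{\mu_1,\ldots,\mu_n\}$. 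I would close the estimate with two classical facts: the diagonal $(g_{jj})$ is majorised by the eigenvalue vector of $G$ (Schur--Horn), whence $\sum_j c_j g_{jj} \le \sum_j c^\downarrow_j \lambda^\downarrow_j(G)$; and the rearrangement inequality then bounds the latter by $\sum_j \mu_j \desabseig{G}{j}$, precisely because $\mu_1 \ge \mu_2 \ge \cdots \ge \mu_n \ge 0$. This is exactly where the descending hypothesis on $\vec{\mu}$ enters, matching the earlier remark that the ordering is needed only for the triangle inequality; the non-smooth instants form a null set and, $\enorm{W(t)}$ being absolutely continuous, do not affect the integrated bound.
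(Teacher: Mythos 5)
Your proposal is correct, and it reaches Theorem~\ref{Thrm:angle_bound} by a genuinely different route from the paper's. You share the paper's first move verbatim: the left inequality follows formally from subadditivity via $\enorm{U} \leq \enorm{U V} + \enorm{V^{-1}} = \enorm{U V} + \enorm{V}$. For subadditivity itself, however, the paper proceeds discretely: it reduces general $\vec{\mu}$ to the vectors $\vecmu{m}$ using Eq.~(\ref{E:enorm_decomposition}), proves the first-order formula for the eigenvalue arguments of $U V^\epsilon$ from scratch with the Gerschgorin circle theorem (with a separate treatment of degenerate eigenvalues), dominates the result by the eigenvalues of the Hermitian matrix $\tilde{H}_a(U) + \epsilon \tilde{H}_b(V)$, applies Ky Fan's partial-sum inequality, and finally iterates over $U V = U (V^{1/q})^q$ and lets $q \to \infty$. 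Your two-leg path with the differential inequality is the continuum version of that iteration, and your Schur--Horn-plus-rearrangement endgame replaces the Ky Fan step; it handles every admissible $\vec{\mu}$ at once (no decomposition into $\vecmu{m}$'s), and it isolates cleanly where $\mu_1 \geq \cdots \geq \mu_n \geq 0$ enters. The trade-off is self-containedness at the crux: the statement that the eigenvalue arguments of $W(t)$ move at rates given by the diagonal entries of $G(t)$ in a suitable eigenbasis of $W(t)$, degenerate spectra included, is precisely what the paper labors to establish via Gerschgorin circles, while you invoke it as known. The invocation is legitimate (each leg of your path is analytic in $t$ and unitary for real $t$, so semisimple-eigenvalue perturbation theory for normal matrices applies), but it deserves an explicit citation and uniform control of the error terms along the path.

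Two smaller points need shoring up, neither of which threatens the architecture. First, your claim that the exceptional instants form a null set is not literally true: if, say, $U = -I$ and $B$ has a kernel vector, an argument sits at $\pi$ throughout the second leg, and ties among the $\desabsangle{W(t)}{j}$ can likewise persist. By analyticity of each leg, any such coincidence is either isolated or holds identically on the whole leg; in the persistent cases the affected terms still obey the claimed bound (a constant argument contributes zero rate, and permanently tied terms permit a consistent choice of the ranking $\sigma$), so the integrated inequality survives, but this dichotomy should be stated. Second, in the chain $\sum_j c_j g_{jj} \leq \sum_j c_j^\downarrow \deseig{G}{j} \leq \sum_j \mu_j \desabseig{G}{j}$, the last step needs one more line than ``rearrangement,'' since the $c_j^\downarrow$ and $\deseig{G}{j}$ carry signs: write $\sum_j c_j^\downarrow \deseig{G}{j} = \max_P \sum_j c_j \deseig{G}{P(j)} \leq \max_P \sum_j |c_j| \, \desabseig{G}{P(j)}$ and then apply the rearrangement bound (the paper's Lemma~\ref{Lem:dot_product}) to the right-hand side.
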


 I move both the proof of Theorem~\ref{Thrm:angle_bound} and the conditions for
 the second inequality in Eq.~(\ref{E:UV_inequality}) to become an equality to
 Appendix~\ref{Sec:proof_of_angle_bound} as they are rather involved and
 technical.

\begin{corollary}
 Eq.~(\ref{E:UV_inequality}) in Theorem~\ref{Thrm:angle_bound} also holds if
 $\enorm{\cdot}$ is replaced by $\Nenorm{\cdot}$.
 \label{Cor:angle_bound}
\end{corollary}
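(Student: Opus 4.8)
The plan is to derive Corollary~\ref{Cor:angle_bound} directly from Theorem~\ref{Thrm:angle_bound} by exploiting the fact that $\Nenorm{\cdot}$ is obtained from $\enorm{\cdot}$ by minimizing over a global phase [Eq.~(\ref{E:Nnorm_def})], together with the multiplicativity of the phase factor under matrix multiplication. No new inequality about arguments of eigenvalues is needed; everything reduces to the bound for $\enorm{\cdot}$ already established in Theorem~\ref{Thrm:angle_bound}.

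For the upper bound $\Nenorm{UV} \leq \Nenorm{U} + \Nenorm{V}$, I would observe that for any $x,y\in{\mathbb R}$ one has $(e^{ix}U)(e^{iy}V) = e^{i(x+y)}UV$. Applying the second inequality of Theorem~\ref{Thrm:angle_bound} to the unitary matrices $e^{ix}U$ and $e^{iy}V$ gives
\[
 \enorm{e^{i(x+y)}UV} \leq \enorm{e^{ix}U} + \enorm{e^{iy}V} .
\]
Since $e^{i(x+y)}UV$ is one admissible choice in the minimization $\min_z \enorm{e^{iz}UV} = \Nenorm{UV}$, this shows $\Nenorm{UV} \leq \enorm{e^{ix}U} + \enorm{e^{iy}V}$ for \emph{all} $x$ and $y$. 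Taking the minimum over $x$ and over $y$ on the right-hand side (which now decouple) and invoking Eq.~(\ref{E:Nnorm_def}) yields $\Nenorm{UV} \leq \Nenorm{U} + \Nenorm{V}$.

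For the lower bound $|\Nenorm{U} - \Nenorm{V}| \leq \Nenorm{UV}$, rather than re-deriving it from scratch I would bootstrap from the upper bound just obtained, which is the standard route to the reverse triangle inequality for a (pseudo-)norm. Writing $U = (UV)V^{-1}$ and applying the upper bound gives $\Nenorm{U} \leq \Nenorm{UV} + \Nenorm{V^{-1}}$; the already-listed property $\Nenorm{V^{-1}} = \Nenorm{V}$ then gives $\Nenorm{U} - \Nenorm{V} \leq \Nenorm{UV}$. Symmetrically, from $V = U^{-1}(UV)$ and $\Nenorm{U^{-1}} = \Nenorm{U}$ I obtain $\Nenorm{V} - \Nenorm{U} \leq \Nenorm{UV}$, and combining the two inequalities proves the claim.

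The proof presents no serious obstacle---it is essentially a wrapper around Theorem~\ref{Thrm:angle_bound}---so the only point requiring care is the interchange of the minimum with the inequality in the upper-bound step. The key observation making this clean is that, for each fixed pair $(x,y)$, the single phase $z=x+y$ is one admissible value in the minimization defining $\Nenorm{UV}$, so the bound $\Nenorm{UV} \leq \enorm{e^{ix}U} + \enorm{e^{iy}V}$ holds simultaneously for all $x$ and $y$, after which the two phase minimizations on the right can be performed independently. I would also note in passing that the minima in Eq.~(\ref{E:Nnorm_def}) are attained, by the continuity of $\enorm{e^{ix}U}$ in $x$ (established among the basic properties above) and the compactness of the phase circle, so that all expressions written above are well defined.
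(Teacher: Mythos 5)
Your proposal is correct and takes essentially the same route as the paper: both apply Theorem~\ref{Thrm:angle_bound} to the phase-shifted matrices $e^{ix}U$ and $e^{iy}V$ (the paper fixes $x,y$ at the minimizers, you keep them arbitrary and minimize at the end, a cosmetic difference), and both obtain the reverse inequality $\left|\Nenorm{U}-\Nenorm{V}\right|\leq\Nenorm{UV}$ by bootstrapping from the just-proven bound via $U=(UV)V^{-1}$ and the property $\Nenorm{V^{-1}}=\Nenorm{V}$.
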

\begin{proof}
 For any $n$-dimensional unitary matrices $U$ and $V$, there exist $x, y \in
 {\mathbb R}$ such that $\enorm{e^{i x} U} = \Nenorm{U}$ and $\enorm{e^{i y}
 V} = \Nenorm{V}$.  By Theorem~\ref{Thrm:angle_bound}, $\Nenorm{U} + \Nenorm{V}
 \geq \enorm{e^{i (x+y)} U V} \geq \Nenorm{U V}$.  The proof of $\left|
 \Nenorm{U} - \Nenorm{V} \right| \leq \Nenorm{U V}$ is then a replica of that
 of the first half of Eq.~(\ref{E:UV_inequality}).
\end{proof}

 From Theorem~\ref{Thrm:angle_bound} and Corollary~\ref{Cor:angle_bound}, I
 conclude that $\emetric{\cdot}{\cdot}$ and $\Nemetric{\cdot}{\cdot}$ are a
 metric and pseudo-metric, respectively.  Furthermore, $\enorm{\cdot},
 \Nenorm{\cdot}$ behave somewhat like a norm and semi-norm respectively in the
 sense that $\enorm{U V} \leq \enorm{U} + \enorm{V}$.  In other words, the
 addition operation in the conventional definition of a norm is replaced by
 multiplication.  Note further that $\enorm{\cdot}$ does not obey $\enorm{a U}
 = |a| \enorm{U}$ for all scalars $a$.

\begin{figure*}[t]
 \centering
 \includegraphics*[scale=0.32]{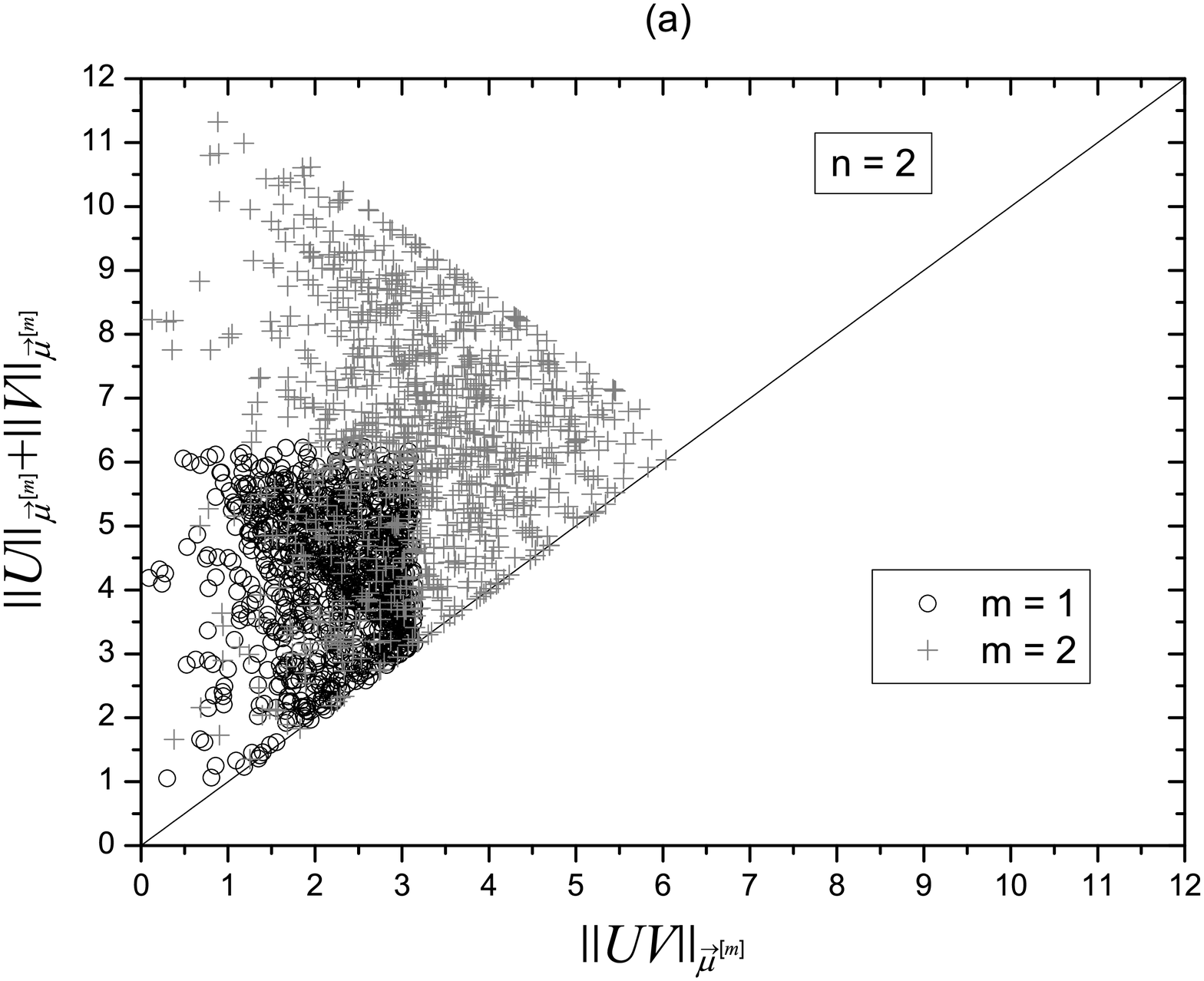}
 \includegraphics*[scale=0.32]{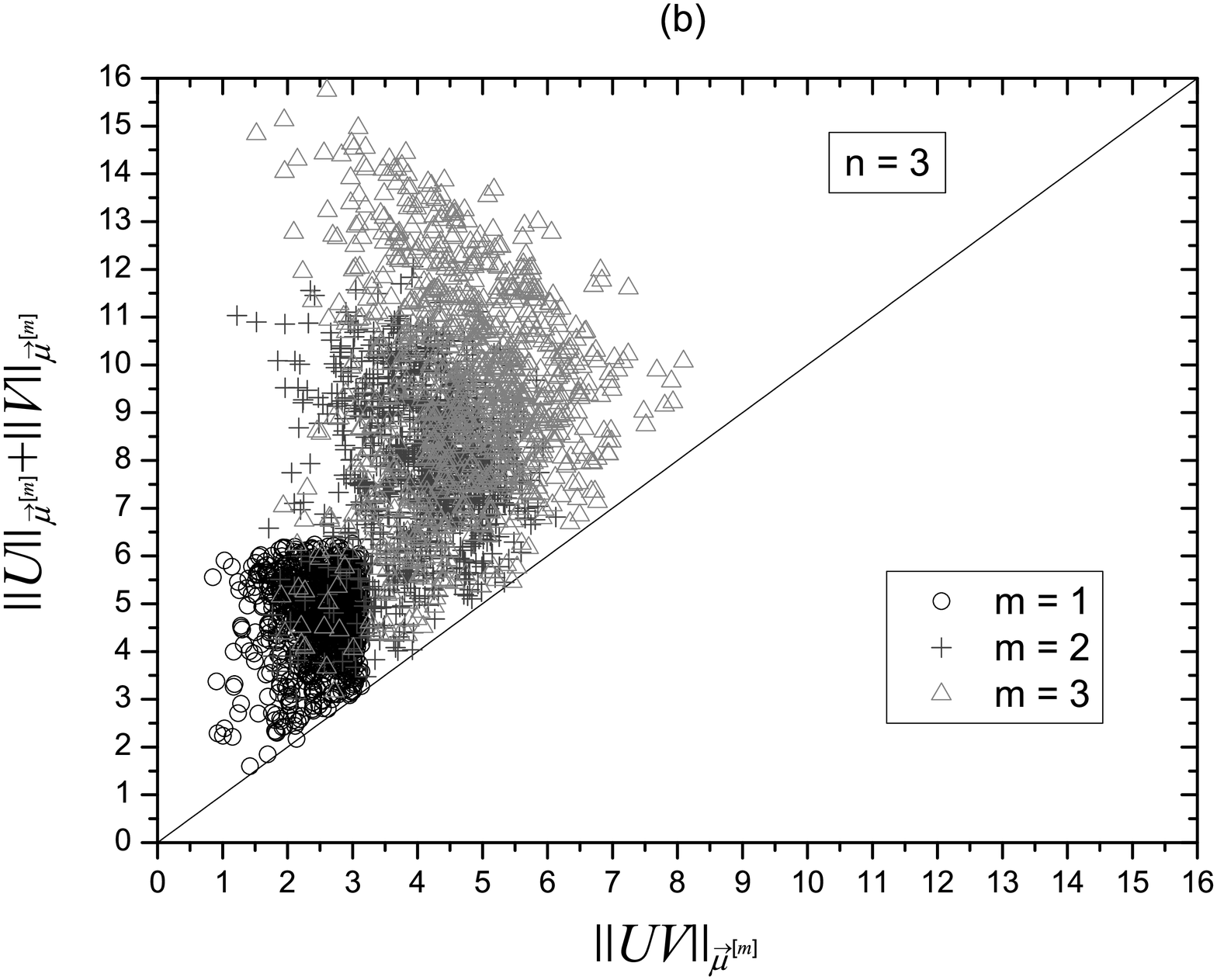}
 \includegraphics*[scale=0.32]{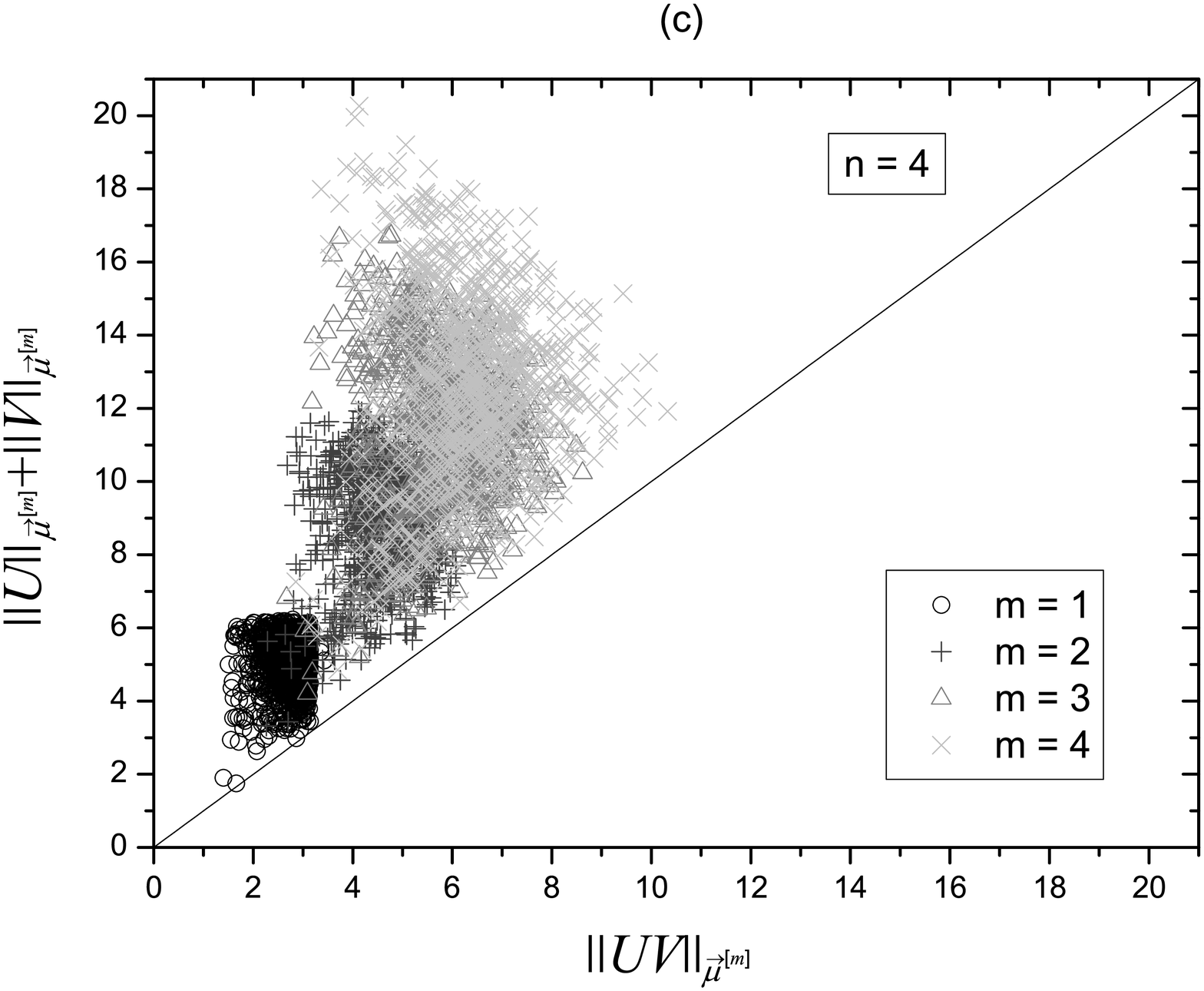}
 \caption{Plots of $\spcenorm{U V}{m}$ against $\spcenorm{U}{m} +
  \spcenorm{V}{m}$ for various values of $m$ where the dimension $n$ of the
  unitary matrices $U$ and $V$ equals (a)~$n = 2$, (b)~$n = 3$ and (c)~$n = 4$.
  The matrices $U$ and $V$ are randomly chosen from the Haar measure of $U(n)$
  and the sample size in each case is $1000$.  The black lines represent the
  situation in which $\spcenorm{U V}{m} = \spcenorm{U}{m} + \spcenorm{V}{m}$.
  \label{F:enorm}
 }
\end{figure*}

\begin{figure*}[t]
 \centering
 \includegraphics*[scale=0.32]{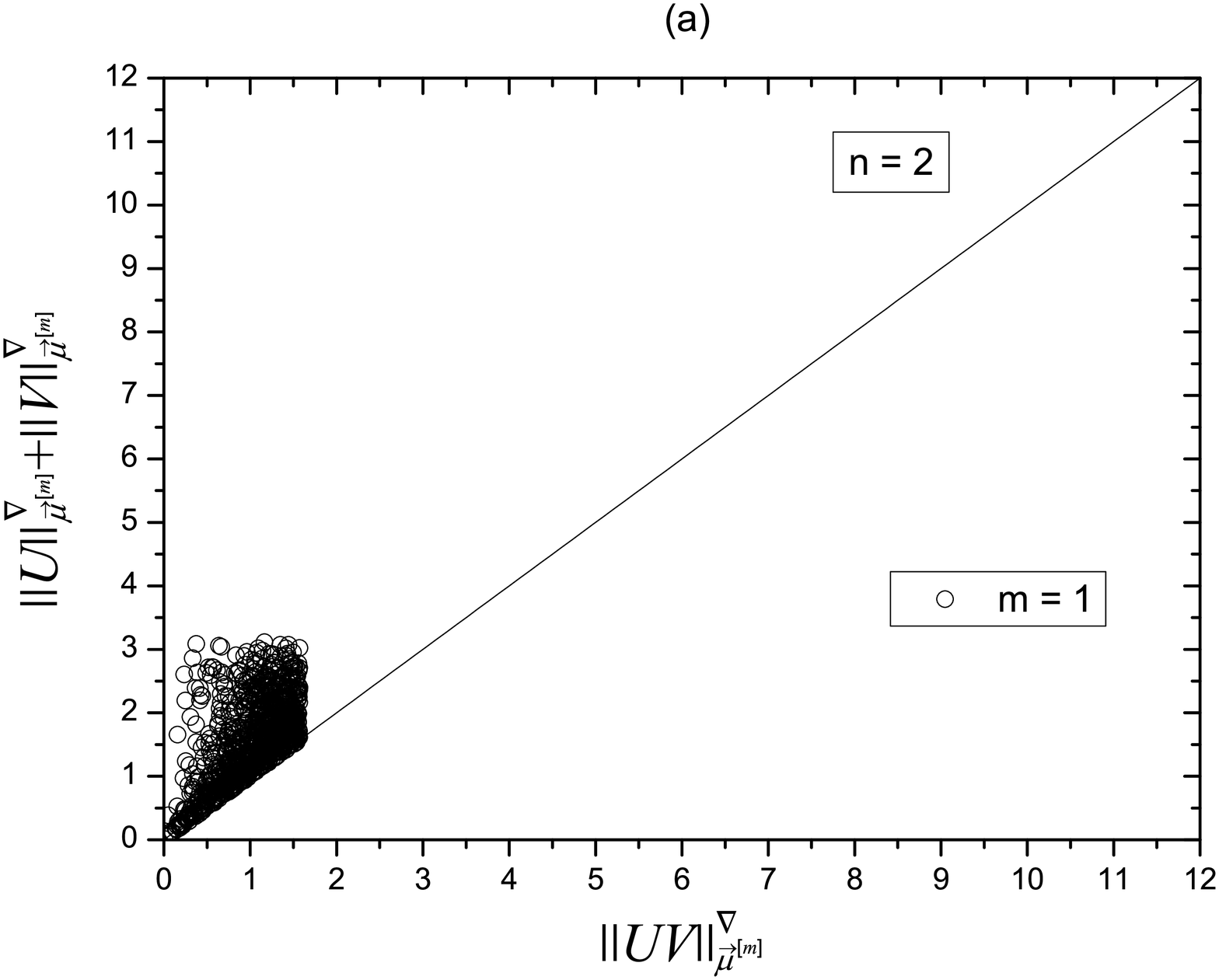}
 \includegraphics*[scale=0.32]{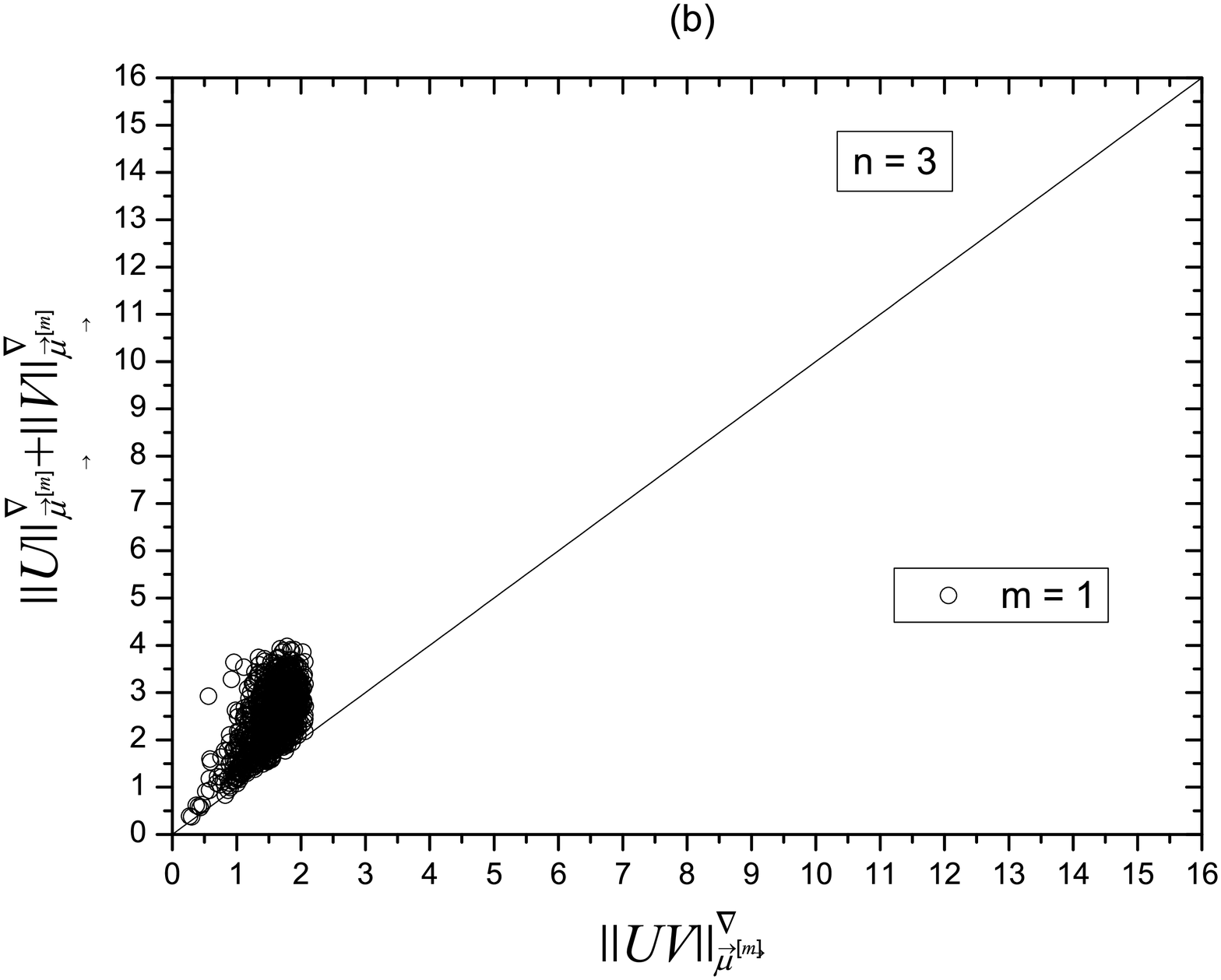}
 \includegraphics*[scale=0.32]{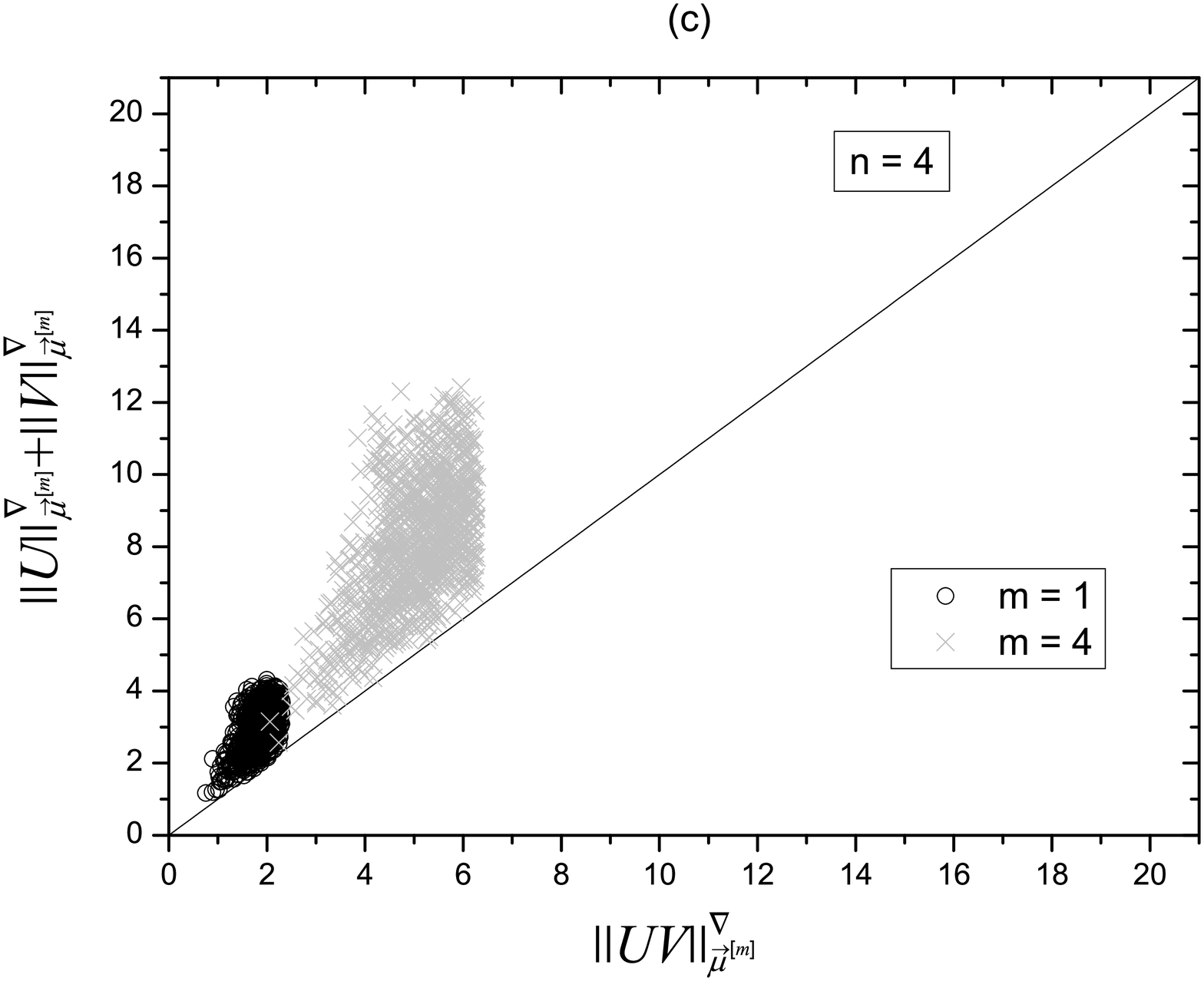}
 \caption{Plots of $\Nspcenorm{U V}{m}$ against $\Nspcenorm{U}{m} +
  \Nspcenorm{V}{m}$ for various values of $m$ where the dimension $n$ of the
  unitary matrices $U$ and $V$ equals (a)~$n = 2$, (b)~$n = 3$ and (c)~$n = 4$.
  The parameters and sampling method used are the same as those in
  Fig.~\ref{F:enorm}.  Since $\Nspcenorm{U}{3} = \Nspcenorm{U}{2} = 2
  \Nspcenorm{U}{1}$, I only show the cases of $m = 1$ and $m = 4$ in this
  figure.  And to allow easy comparison, the scales used in the plots are the
  same as those used in Fig.~\ref{F:enorm}.
  \label{F:Nenorm}
 }
\end{figure*}

\begin{remark}
 For any $n$-dimensional matrices $U$ and $V$ satisfying $\desangle{U}{1} +
 \desangle{V}{1} \leq \pi$ and $\desangle{U}{n} + \desangle{V}{n} > -\pi$, the
 following closely related bound
 \begin{equation}
  \desangle{U V}{1} \leq \desangle{U}{1} + \desangle{V}{1}
  \label{E:related_bound1}
 \end{equation}
 has been proven by a few authors using different
 methods~\cite{convex_hull_bound,q_calculus,principal_bound}.  Recently, our
 group~\cite{lam_thesis,elementary_proof} also found elementary proofs of
 Eq.~(\ref{E:related_bound1}) as well as the special case of
 Theorem~\ref{Thrm:angle_bound} in which $\vec{\mu} = \vecmu{1}$.
 \label{Rem:history}
\end{remark}

\begin{remark}
 The proof for the special case of $\vec{\mu} = \vecmu{n}$ is rather simple.  I
 simply need to use the identity $\prod_{j=1}^n e^{i\desangle{U V}{j} / 2n} =
 \det [ (U V)^{1/2n} ] = [ \det ( U V ) ] ^{1/2n} = \left[ \det U \det V
 \right]^{1/2n} = \det U^{1/2n} \det V^{1/2n} = \prod_{j=1}^n e^{i \left[
 \desangle{U}{j} + \desangle{V}{j} \right] / 2n}$.  Then, by equating the
 arguments on both sides and by observing that both arguments are in the
 interval $(-\pi,\pi]$, I obtain the required inequality.
 \label{Rem:determinant}
\end{remark}

\begin{remark}
 It is easy to check from the proof of Corollary~\ref{Cor:angle_bound} that
 $\Nenorm{U V} = \Nenorm{U} + \Nenorm{V}$ if and only if there exist $x, y \in
 {\mathbb R}$ such that $\Nenorm{U} = \enorm{e^{i x} U}$, $\Nenorm{V} =
 \enorm{e^{i y} V}$, $\enorm{e^{i (x+y)} U V} = \enorm{e^{i x} U} +
 \enorm{e^{i y} V}$ and $\Nenorm{U V} = \enorm{e^{i (x+y)} U V}$.  Therefore,
 the conditions for equality of the triangle inequality for $\Nenorm{\cdot}$
 seems to be more stringent than those for $\enorm{\cdot}$.  On the other hand,
 as I have pointed out in the list of the basic properties of $\Nenorm{\cdot}$
 that $\Nspcenorm{U}{3} = \Nspcenorm{U}{2} = 2\Nspcenorm{U}{1}$.  Thus, the
 conditions for $\Nspcenorm{U V}{m} = \Nspcenorm{U}{m} + \Nspcenorm{V}{m}$ is
 not particularly more stringent than those for $\spcenorm{\cdot}{m}$ for $m
 \leq 3$.  But in any case, as $m$ or $n$ increases, it is more and more
 difficult for $\Nenorm{U V} = \Nenorm{U} + \Nenorm{V}$ provided that $U$ and
 $V$ are drawn randomly from the Haar measure of $U(n)$.  This finding is
 reflected in the plots in Fig.~\ref{F:Nenorm}.  Interestingly, by
 comparing Figs.~\ref{F:enorm} and~\ref{F:Nenorm}, I find that on average (over
 the Haar measure of $U(n)$) the value of $\Nenorm{U} + \Nenorm{V} -
 \Nenorm{U V}$ is smaller than $\enorm{U} + \enorm{V} - \enorm{U V}$.  This is
 probably due to the fact that $\Nenorm{U} \leq \enorm{U}$.
 \label{Rem:Nenorm}
\end{remark}

 Remarkably, several new inequalities involving the eigenvalues of a product of
 two unitary matrices can be found by modifying the proof of
 Theorem~\ref{Thrm:angle_bound}~\cite{original_article}.  But I will not
 further elaborate on this matter here as it is beyond the scope of this paper.

\section{Measure Of The Degree Of Non-Commutativity Between Two Unitary
 Operators \label{Sec:Non-Commutativity}}
 Although the family of $\enorm{\cdot}$'s are only ``unoptimized'' measures of
 the minimum resources needed to perform the unitary transformation in its
 argument, it is useful in measuring the degree of non-commutativity between
 two unitary operators.

\begin{definition}
 Let $U, V$ two $n$-dimensional unitary matrices.  Suppose $\vec{\mu}$ be the
 vector defined in Def.~\ref{Def:norm_metric}.  I define
 \begin{equation}
  \ecomm{U}{V} = \emetric{U V}{V U} \equiv \enorm{U V U^{-1} V^{-1}} .
  \label{E:commutator_norm}
 \end{equation}
 \label{Def:non_commutativity}
\end{definition}

 Physically, $\ecomm{U}{V}$ can be interpreted as a measure of the minimum
 average absolute deviation from the median energy of the system times the
 evolution time required to convert $U V$ to $V U$.  The higher the required
 resources, the more ``non-commutative'' the operators $U$ and $V$ are.

 The following simple properties of $\ecomm{\cdot}{\cdot}$ carry over from the
 properties of $\enorm{\cdot}$ reported in Sec.~\ref{Sec:Properties}:
\begin{itemize}
 \item $\ecomm{U}{V}: U(n)\times U(n) \longrightarrow [0,\pi\sum_{j=1}^n
  \mu_j]$ is a surjective map.
 \item $\ecomm{U}{V} = 0$ if and only if $U$ commutes with $V$.
 \item $\ecomm{\cdot}{\cdot}$ is not a metric.  In fact, any meaningful measure
  of commutativity is not a metric for commutativity is not transitive.
 \item $\ecomm{e^{i x}U}{e^{i y}V} = \ecomm{U}{V}$ for any $x,y\in
  {\mathbb R}$.
 \item $\ecomm{U^{-1}}{V^{-1}} = \ecomm{U}{V} =
  \ecomm{W U W^{-1}}{W V W^{-1}}$.
 \item ${\mathfrak C}_{a\vec{\mu}}(U,V) = a\ecomm{U}{V}$ for all $a > 0$.
 \item $\ecomm{U}{V} = \pi \sum_j \mu_j$ if $U V U^{-1} V^{-1} = -I$.  In
  particular, the three Pauli matrices $\sigma_x$, $\sigma_y$ and $\sigma_z$
  obey $\ecomm{\sigma_x}{\sigma_y} = \ecomm{\sigma_y}{\sigma_z} =
  \ecomm{\sigma_z}{\sigma_x} = \pi \sum_k \mu_k$.  In this sense, pairs of
  distinct Pauli matrices are examples of the most non-commutative unitary
  operators in $U(2)$.
 \item ${\mathfrak C}_{\vecmu{1}}(U_1\otimes V_1,U_2\otimes V_2) \leq
  {\mathfrak C}_{\vecmu{1}}(U_1,V_1) + {\mathfrak C}_{\vecmu{1}}(U_2,V_2)$, and
  ${\mathfrak C}_{\vecmu{n_1 n_2}}(U_1\otimes V_1,U_2\otimes V_2) \leq n_2
  {\mathfrak C}_{\vecmu{n_1}}(U_1,V_1) + n_1 {\mathfrak C}_{\vecmu{n_2}}
  (U_2,V_2)$.
 \item If $U(t)$ and $V(t)$ are continuous one-parameter families of unitary
  matrices, then $\ecomm{U(t)}{V(t)}$ is continuous.
 \item Let $H_i(t)$ be an $n$-dimensional Hamiltonian and $U_i(t)$ be the
  unitary operator generated by $H_i(t)$ for $i=1,2$.  Then,
  \begin{eqnarray}
   & & \left. \frac{d^2 \ecomm{U_1(t)}{U_2(t)}}{dt^2} \right|_{t=0} \nonumber
    \\
   & = & 2 \lim_{t\rightarrow 0} \frac{\ecomm{U_1(t)}{U_2(t)}}{t^2} \nonumber
    \\
   & = & 2\sum_{j=1}^n \mu_j \desabseig{-i [ H_1(0), H_2(0)]}{j} .
   \label{E:d2ecommdt2}
  \end{eqnarray}
  This shows the relation between the curvature of $\ecomm{U_1(t)}{U_2(t)}$ and
  the singular values of the commutator of the the corresponding generators
  $[H_1(t), H_2(t)]$.
\end{itemize}
 
\section{Discussions \label{Sec:Discuss}}
 In summary, I have introduced a family of metrics $\emetric{\cdot}{\cdot}$ and
 a family of pseudo-metrics $\Nemetric{\cdot}{\cdot}$ on finite-dimensional
 unitary matrices.  In particular, the pseudo-metric $\Nemetric{U}{V}$ can be
 interpreted as a measure of the minimum average absolute deviation from the
 median energy of the system times the evolution time needed to perform the
 unitary transformation $U V^{-1}$ (and hence equivalently also $V U^{-1}$).
 Besides, $\Nemetric{\cdot}{\cdot}$ is related to the Bures angle between two
 quantum states; while $\emetric{U}{V}$ is related to the generalized spectral
 norm of the infinitesimal generator of $U V^{-1}$.

 Another aspect of this paper is the proposed measure on the degree of
 non-commutativity between two unitary matrices $U$ and $V$ based on
 $\enorm{U V U^{-1} V^{-1}}$.  Its physical meaning is a measure of the minimum
 resources needed to convert $U V$ to $V U$.  Interestingly, this measure is
 related to the generalized spectral norm of the commutator of the
 infinitesimal generators of $U$ and $V$.

 The analysis here so far are restricted to finite-dimensional unitary
 matrices.  It is instructive to see how it can be extended to cover the
 infinite-dimensional case.  Also a possible future research direction is to
 investigate the possibility of extending the results here to trace-preserving
 completely positive maps and to find non-trivial applications in quantum
 information processing.

\appendix
\section{Proof of Theorem~\ref{Thrm:interpretation}}
\label{Sec:proof_of_interpretation}
 The following lemma is needed to prove Theorem~\ref{Thrm:interpretation}.

\begin{lemma}
 Let $x_j$'s and $y_j$'s be real numbers satisfying $x_1 \geq x_2 \geq \cdots
 \geq x_n$ and $y_1 \geq y_2 \geq \cdots \geq y_n$.  Then
 \begin{equation}
  \sum_{j=1}^n x_j y_{P(j)} \leq \sum_{j=1}^n x_j y_j \label{E:dot_product}
 \end{equation}
 for all permutation $P$ of the set $\{ 1,2,\cdots ,n \}$.
 \label{Lem:dot_product}
\end{lemma}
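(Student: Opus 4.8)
The plan is to prove this via an exchange argument, recognizing Eq.~(\ref{E:dot_product}) as the classical rearrangement inequality: among all matchings of two ordered lists, pairing large with large maximizes the sum of products. The whole argument rests on the elementary two-term estimate that for real numbers $a \geq b$ and $c \geq d$ one has $(a-b)(c-d) \geq 0$, or equivalently
\[
 x_i y_{P(i)} + x_j y_{P(j)} \leq x_i y_{P(j)} + x_j y_{P(i)}
\]
whenever $x_i \geq x_j$ and $y_{P(i)} \leq y_{P(j)}$.

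First I would measure how far a permutation $P$ departs from the identity by its number of inversions, namely the number of pairs of positions $i < j$ with $P(i) > P(j)$, and proceed by induction on this count. If the count is zero then $P$ is the identity permutation and Eq.~(\ref{E:dot_product}) holds with equality, giving the base case. If the count is positive, a non-identity permutation must possess a descent at some adjacent pair of positions, i.e.\ there is an $i$ with $P(i) > P(i+1)$ (otherwise $P(1) < P(2) < \cdots < P(n)$ would force $P$ to be the identity). Since $x_i \geq x_{i+1}$ and, because the $y_k$'s are in descending order, $y_{P(i)} \leq y_{P(i+1)}$, the displayed estimate shows that swapping the two assignments --- replacing $P$ by $P'$ with $P'(i) = P(i+1)$, $P'(i+1) = P(i)$ and $P'(k) = P(k)$ otherwise --- does not decrease the sum $\sum_k x_k y_{P(k)}$.

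The crucial bookkeeping step is that swapping an adjacent descent strictly lowers the inversion count by exactly one while leaving every other inversion relationship untouched. Hence $P'$ has one fewer inversion than $P$ and $\sum_k x_k y_{P(k)} \leq \sum_k x_k y_{P'(k)}$; the induction hypothesis applied to $P'$ then yields $\sum_k x_k y_{P'(k)} \leq \sum_k x_k y_k$, closing the induction. I do not anticipate a genuine obstacle here, as this is a textbook result; the only points requiring care are guaranteeing that the monovariant (the inversion count) strictly decreases so that the induction terminates, and noting that ties among the $x_j$'s or $y_j$'s only ever weaken the relevant inequalities to non-strict form, which is exactly what the statement allows.
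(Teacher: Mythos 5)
Your proof is correct. It rests on exactly the same elementary estimate as the paper's --- the two-term exchange inequality $(a-b)(c-d) \geq 0$, which is precisely the paper's $n=2$ base case --- but the induction is organized differently. The paper inducts on the length $n$ of the lists (stating $n=1$ and $n=2$ as base cases and leaving the inductive step to the reader), whereas you induct on the number of inversions of the permutation $P$, removing one adjacent descent at a time. Your bookkeeping is sound: a non-identity permutation must have an adjacent descent, swapping at that descent cannot decrease the sum, and it strictly decreases the inversion count by one while leaving all other pairs' relative order unchanged, so the induction terminates at the identity. What each route buys: your inversion-count argument is fully explicit, spelling out the exchange mechanism that the paper's one-line ``mathematical induction on $n$'' leaves implicit, at the modest cost of introducing the inversion monovariant; the paper's induction on $n$ would instead, in its (unwritten) inductive step, move the largest index into its correct position with a single not-necessarily-adjacent swap and recurse on the remaining $n-1$ entries. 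Both are standard, equally rigorous proofs of the rearrangement inequality, and your handling of ties among the $x_j$'s or $y_j$'s (which only make the inequalities non-strict) is exactly right.
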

\begin{proof}
 Eq.~(\ref{E:dot_product}) is trivially true for $n = 1$.  And its validity for
 $n = 2$ follows from the inequality $(x_1 - x_2)(y_1 - y_2) \geq 0$.  This
 Lemma can then be proven by mathematical induction on $n$.
\end{proof}

\begin{proof}[Proof of Theorem~\ref{Thrm:interpretation}]
 Note that $t$ can only be $0$ when $e^{i x} U = I$.  In this case,
 Eq.~(\ref{E:min_max_interpretation}) clearly makes sense and is equal to $0$.
 So, this theorem is trivially true.  Whereas if $t < 0$, then
 ${\mathscr D}E(H,|\phi\rangle) t \leq 0$.  In this case, I may consider $-H$
 and $-t$ instead in analyzing Eq.~(\ref{E:min_max_interpretation}) for
 ${\mathscr D}E(-H,|\phi\rangle) \,(-t) \geq 0$ and $\exp ( - i H t / \hbar ) =
 \exp [ -i (-H) (-t) / \hbar ]$.  Hence, I only need to consider the remaining
 the case of $t > 0$ from now on.

 I first justify the use of maximum and minimum in
 Eq.~(\ref{E:min_max_interpretation}); and I do so by considering a similar
 min-max expression:
 \begin{widetext}
 \begin{equation}
  R' = \min_{x\in [0,2\pi)} \quad \min_{H t \colon \exp ( - i H t / \hbar) =
  e^{i x} U} \quad \max_{|\phi\rangle \in C(H,(\alpha_j))} \frac{A \langle\phi|
  \sqrt{H^\dag H} |\phi\rangle t}{\hbar} .
  \label{E:alt_min_max_interpretation}
 \end{equation}
 \end{widetext}
 Note that the set $\{ \langle\phi| \sqrt{H^\dag H} |\phi\rangle \colon
 |\phi\rangle\in C(H,(\alpha_j)) \}$ equals $\{ \sum_{j=1}^n |\alpha_j|^2
 \desabseig{H}{P(j)} \colon P\in S_n \}$, where $S_n$ denotes the permutation
 group of $n$ elements.  So from Lemma~\ref{Lem:dot_product}, for any given
 $H$, $t$ and $x$, the maximum in Eq.~(\ref{E:alt_min_max_interpretation})
 exists and is equal to $A t \sum_{j=1}^n |\alpha_j|^2 \desabseig{H}{j} /
 \hbar$.  Now suppose $\desabseig{H}{1} \geq \pi\hbar / t$.  I consider the new
 Hamiltonian $H'$ formed by changing only the eigenvalue $\lambda$
 corresponding to $\desabseig{H}{1}$ to $\lambda \bmod (2\pi\hbar / t) \equiv
 \lambda' \in (-\pi\hbar / t,\pi\hbar/t]$.  Clearly, $|\lambda'| \leq |\lambda|
 = \desabseig{H}{1}$ and $\exp \left( -i H' t / \hbar \right) = \exp \left( -i
 H t / \hbar \right)$.  In other words, there exists $k \in \{ 1,2,\ldots ,n\}$
 such that
 \begin{equation}
  \desabseig{H'}{j} = \left\{ \begin{array}{ll}
   \desabseig{H}{j+1} & \text{if\ } j < k , \\
   |\lambda'| & \text{if\ } j = k, \\
   \desabseig{H}{j} & \text{if\ } j > k .
  \end{array} \right.
  \label{E:repermutation}
 \end{equation}
 Consequently, by Lemma~\ref{Lem:dot_product},
 \begin{widetext}
 \begin{eqnarray}
  \max_{|\phi'\rangle\in C(H',(\alpha_j))} \langle\phi'| \sqrt{H'^\dag H'}
   |\phi'\rangle - \max_{|\phi\rangle\in C(H,(\alpha_j))} \langle\phi|
   \sqrt{H^\dag H} |\phi\rangle
  & = & |\alpha_k|^2 |\lambda'| - |\alpha_1|^2 \desabseig{H}{1} +
   \sum_{\ell=1}^{k-1} \left( |\alpha_\ell|^2 - |\alpha_{\ell+1}|^2 \right)
   \desabseig{H}{\ell+1} \nonumber \\
  & \leq & \desabseig{H}{1} \left[ |\alpha_k|^2 - |\alpha_1|^2 +
   \sum_{\ell=1}^{k-1} \left( |\alpha_\ell|^2 - |\alpha_{\ell+1}|^2 \right)
   \right] \nonumber \\
  & = & 0 .
 \end{eqnarray}
 \end{widetext}
 That is to say, $\max_{|\phi\rangle\in C(H,(\alpha_j))} \langle\phi|
 \sqrt{H^\dag H} |\phi\rangle$ does not increase if $H$ is replaced by $H'$.
 By repeating this procedure at most $n$ times, I conclude that in order to
 look for the minimum in Eq.~(\ref{E:alt_min_max_interpretation}), I only need
 to consider those $H$'s with all their eigenvalues in $(-\pi\hbar/t,
 \pi\hbar/t]$.  Note that the expression $\max_{|\phi\rangle\in C(H,(
 \alpha_j))} \langle\phi| \sqrt{H^\dag H} |\phi\rangle$ depends on the
 eigenvalues of $H$ rather than the eigenvectors of $H$.  Moreover, for a fixed
 $x\in {\mathbb R}$, there is a unique set of eigenvalues $\deseig{H}{j}$'s for
 $H$ such that $\deseig{H}{j} \in (-\pi\hbar / t,\pi\hbar / t]$ and $\exp
 \left( -i H t / \hbar \right) = e^{i x} U$.  So, the second minimum expression
 (that is, the minimum over $H t$) in Eq.~(\ref{E:alt_min_max_interpretation})
 exists for each $x\in {\mathbb R}$.  As $x$ varies, the set of eigenvalues of
 $H$ that minimizes the second minimum expression in
 Eq.~(\ref{E:alt_min_max_interpretation}) changes linearly with $x$ modulo $2
 \pi\hbar / t$.  Hence, this second minimum expression is a continuous function
 of $x\in {\mathbb R}$ with period $2\pi$.  Therefore, $R'$ exists.

 Now I go on to show that the $H$ and $|\phi\rangle$ that minimize
 Eq.~(\ref{E:alt_min_max_interpretation}) can be chosen to have median
 system energy $M = 0$.  First, I claim that $\sum^- |\alpha_j|^2 \geq 1/2$
 where the sum is over all $j$'s with $\deseig{H}{j} t / \hbar \in \{\pi\} \cup
 (-\pi,0]$.  Suppose the contrary, as $H$ has at most $n$ distinct
 eigenvalues, there exits $\epsilon < 0$ sufficiently close to 0 such that
 $\langle\phi| \sqrt{\left( H + \epsilon \right)^\dag \left( H + \epsilon
 \right)} |\phi\rangle < \langle\phi| \sqrt{H^\dag H} |\phi\rangle$.  This
 contradicts the assumption that $H$ and $|\phi\rangle$ extremize
 Eq.~(\ref{E:alt_min_max_interpretation}).  By a similar argument, $\sum^+
 |\alpha_j|^2 \leq 1/2$ where the sum is over all $j$'s with $\deseig{H}{j} t /
 \hbar \in [0,\pi)$.  Hence, from Eqs.~(\ref{E:Median_def1})
 and~(\ref{E:Median_def2}), the median energy of the system $M$ is equal to
 $0$.

 From the discussions between Eqs.~(\ref{E:absolute1}) and~(\ref{E:absolute2})
 on the minimization of $\sum_j |\alpha_j|^2 |\deseig{H'}{j} - x|$, I conclude
 that ${\mathscr D}E(H',|\phi\rangle) \leq \max_{|\phi\rangle\in C(H',
 (\alpha_j))} \langle\phi| \sqrt{H'^\dag H'} |\phi\rangle$ with equality holds
 if the median system energy $M = 0$.  By comparing
 Eq.~(\ref{E:min_max_interpretation}) with
 Eq.~(\ref{E:alt_min_max_interpretation}), I deduce that $R \leq R'$ if $R$
 exists.  Recall that Eq.~(\ref{E:alt_min_max_interpretation}) is well-defined
 and the extremum can be attained by $H$ and $|\phi\rangle$ such that
 $\deseig{H}{j} \in (-\pi\hbar / t,\pi\hbar / t]$ and the median system energy
 $M = 0$.  With these $H$ and $|\phi\rangle$, ${\mathscr D}E(H,|\phi\rangle) =
 \sum_{j=1}^n |\alpha_j|^2 \desabseig{H}{j} = \langle\phi| \sqrt{H^\dag H}
 |\phi\rangle$.  Hence, $R$ exists and is equal to $R'$.  That is to say,
 Eq.~(\ref{E:min_max_interpretation}) is well-defined and its extremum is
 attained by picking $H$ and $|\phi\rangle$ so that $M = 0$ and $\deseig{H}{j}
 t / \hbar \in (-\pi,\pi]$.
\end{proof}

\section{Proof of Theorem~\ref{Thrm:angle_bound} and the conditions for
 equality in Eq.~(\ref{E:UV_inequality})}
\label{Sec:proof_of_angle_bound}

\begin{proof}[Proof of Theorem~\ref{Thrm:angle_bound}]
 I only need to show the second half of the inequality in
 Eq.~(\ref{E:UV_inequality}) as the first half follows directly from it.  More
 precisely, from the second half of Eq.~(\ref{E:UV_inequality}), $\enorm{U}
 \leq \enorm{U V} + \enorm{V^{-1}} = \enorm{U V} + \enorm{V}$; and similarly
 $\enorm{V} \leq \enorm{U^{-1}} + \enorm{U V} = \enorm{U} + \enorm{U V}$.  And
 from Eq.~(\ref{E:enorm_decomposition}), it suffices to prove this theorem by
 showing its validity for all $\vecmu{j}$ $(j=1,2,\ldots , n)$.

 Let $\epsilon > 0$ be a small real number.  Then,
 \begin{equation}
  U V^\epsilon = \sum_{j,\ell} e^{i \left( \unorderangle{U}{j} + \epsilon
  \unorderangle{V}{\ell} \right)} \unorderbarebra{U}{j}\unorderket{V}{\ell}
  \,\,\unorderket{U}{j}\unorderbra{V}{\ell} .
  \label{E:UVepsilon}
 \end{equation}
 I now follow the strategy used in Ref.~\cite{na_book} to bound the eigenvalues
 of $U V^\epsilon$.  By writing $U V^\epsilon$ in the orthonormal basis $\{
 \unorderket{U}{j} \}_{j=1}^n$, I can regard $U V^\epsilon$ as a matrix with
 matrix elements
 \begin{equation}
  \left( U V^\epsilon \right)_{jk} = e^{i \unorderangle{U}{j}} \sum_\ell e^{i
  \epsilon \unorderangle{V}{\ell}} \unorderbarebra{U}{j}\unorderket{V}{\ell}
  \,\unorderbarebra{V}{\ell}\unorderket{U}{k} . \label{E:UVepsilon_elements}
 \end{equation}

 Let me consider the effect of $V^\epsilon$ on the non-degenerate eigenvalues
 of $U$ first.  Suppose $e^{i \unorderangle{U}{a}}$ is a non-degenerate
 eigenvalue of the unperturbed unitary matrix $U$.  I define an invertible
 matrix $F$ by
 \begin{equation}
  F_{jk} = \left\{ \begin{array}{ll}
   0 & \text{if\ } j\neq k , \\
   \epsilon \alpha & \text{if\ } j = k = a , \\
   1 & \text{otherwise} ,
  \end{array} \right.
  \label{E:F_def}
 \end{equation}
 where $\alpha > 0$ is a constant to be determined later.  Now I can bound the
 location of the corresponding perturbed eigenvalue of $U V^\epsilon$ by
 applying Gerschgorin circle theorem to the isospectral matrix $F U V^\epsilon
 F^{-1}$.  Specifically, the center $C_j$ and radius $R_j$ of the $j$th
 Gerschgorin circle on the complex plane for the matrix $F U V^\epsilon F^{-1}$
 are given by
 \begin{equation}
  C_j = \left( F U V^\epsilon F^{-1} \right)_{jj} = \left( U V^\epsilon
  \right)_{jj}
  \label{E:G_center}
 \end{equation}
 and
 \begin{eqnarray}
  R_j & = & \sum_{\ell\neq j} \left| \left( F U V^\epsilon F^{-1}
   \right)_{j\ell} \right| \nonumber \\
  & = & \left\{ \begin{array}{ll}
   \epsilon \alpha \sum_{\ell\neq a} \left| \left( U V^\epsilon \right)_{a\ell}
    \right| & \text{if\ } j = a , \\
   \\
   \frac{\left| \left( U V^\epsilon \right)_{ja} \right|}{\epsilon \alpha} +
   \sum_{\ell\neq a,j} \left| \left( U V^\epsilon \right)_{j\ell} \right| &
    \text{if\ } j \neq a .
  \end{array} \right.
  \label{E:G_radius}
 \end{eqnarray}
 Combined with
 \begin{equation}
  \sum_{\ell=1}^n \unorderbarebra{U}{j}\unorderket{V}{\ell}
  \,\unorderbarebra{V}{\ell}\unorderket{U}{k} = \delta_{jk}
  \label{E:completeness}
 \end{equation}
 where $\delta_{jk}$ is the Kronecker delta and
 \begin{equation}
  \left| e^{i x} - 1 \right| = 2 \sin \frac{|x|}{2} \leq |x|
  \text{\ whenever\ } \frac{-\pi}{2} \leq x \leq \frac{\pi}{2} ,
  \label{E:sin_inequality}
 \end{equation}
 I conclude that the distance between the centers of the $a$th and $j$th
 Gerschgorin circles with $j\neq a$ on the complex plane obeys
 \begin{widetext}
 \begin{eqnarray}
  \left| C_a - C_j \right| & \geq & \left| e^{i \unorderangle{U}{a}} - e^{i
   \unorderangle{U}{j}} \right| - \left| \left( U V^\epsilon \right)_{aa} -
   e^{i \unorderangle{U}{a}} \right| - \left| \left( U V^\epsilon \right)_{jj}
   - e^{i \unorderangle{U}{j}} \right| \nonumber \\
  & = & \left| e^{i \unorderangle{U}{a}} - e^{i \unorderangle{U}{j}} \right| -
   \left| e^{i \unorderangle{U}{a}} \sum_{\ell=1}^n \left( e^{i \epsilon
   \unorderangle{V}{\ell}} - 1 \right) \left|
   \unorderbarebra{U}{a}\unorderket{V}{\ell} \right|^2 \right| -
   \left| e^{i \unorderangle{U}{j}} \sum_{\ell=1}^n \left( e^{i \epsilon
   \unorderangle{V}{\ell}} - 1 \right) \left|
   \unorderbarebra{U}{j}\unorderket{V}{\ell} \right|^2 \right| \nonumber \\
  & \geq & \left| e^{i \unorderangle{U}{a}} - e^{i \unorderangle{U}{j}} \right|
   - 2 \sum_{\ell=1}^n \left| e^{i \epsilon \unorderangle{V}{\ell}} - 1 \right|
   \nonumber \\
  & \geq & \left| e^{i \unorderangle{U}{a}} - e^{i \unorderangle{U}{j}} \right|
   - 2 \epsilon \sum_{\ell=1}^n \left| \unorderangle{V}{\ell} \right| .
  \label{E:circle_distance_bound}
 \end{eqnarray}
 \end{widetext}
 And by the same argument, for any $1\leq x\neq y \leq n$,
 \begin{eqnarray}
  \left| \left( U V^\epsilon \right)_{xy} \right| & = & \left| e^{i
   \unorderangle{U}{x}} \sum_{\ell=1}^n e^{i \epsilon \unorderangle{V}{\ell}}
   \unorderbarebra{U}{x}\unorderket{V}{\ell}
   \,\unorderbarebra{V}{\ell}\unorderket{U}{y} \right| \nonumber \\
  & = & \left| \sum_{\ell=1}^n \left( e^{i \epsilon \unorderangle{V}{\ell}} - 1
   \right) \unorderbarebra{U}{x}\unorderket{V}{\ell}
   \,\unorderbarebra{V}{\ell}\unorderket{U}{y} \right| \nonumber \\
  & \leq & \sum_{\ell=1}^n \left| e^{i \epsilon \unorderangle{V}{\ell}} - 1
   \right| \nonumber \\
  & \leq & \epsilon \sum_{\ell=1}^n \left| \unorderangle{V}{\ell} \right| .
  \label{E:off_diagonal_element_bound}
 \end{eqnarray}
 Hence, the sum of radii of the $a$th and $j$th Gerschgorin circles satisfies
 \begin{eqnarray}
  R_a + R_j & \leq & \left[ \frac{1}{\alpha} + \epsilon (n-2) + \epsilon^2
   \alpha (n-1) \right] \sum_{\ell=1}^n \left| \unorderangle{V}{\ell} \right|
   \nonumber \\
  & < & \left[ \frac{1}{\alpha} + \epsilon (n-2) + \epsilon^2 \alpha n \right]
   \sum_{\ell=1}^n \left| \unorderangle{V}{\ell} \right| .
  \label{E:circle_radius_bound}
 \end{eqnarray}
 Since $e^{i \unorderangle{U}{a}}$ is a non-degenerate eigenvalue of the matrix
 $U$, I can always find $\alpha, \epsilon > 0$ satisfying
 \begin{equation}
  \epsilon \alpha \leq \frac{\sqrt{n^2 + 4n} - n}{2n} \label{E:epsilon_value}
 \end{equation}
 and
 \begin{equation}
  \alpha \geq \max_{j\neq a} \frac{2 \sum_{\ell=1}^n \left|
  \unorderangle{V}{\ell} \right|}{\left| \exp \left( i \unorderangle{U}{a}
  \right) - \exp \left( i \unorderangle{U}{j} \right) \right|} .
  \label{E:alpha_value}
 \end{equation}
 With these choices of $\alpha$ and $\epsilon$, the $a$th Gerschgorin circle
 will be disjointed from all the other Gerschgorin circles because $|C_a - C_j|
 > R_a + R_j$ for all $j\neq a$.  So, according to Gerschgorin circle theorem,
 there is exactly one eigenvalue of $U V^\epsilon$ located inside the circle
 centered at $C_a$ and radius $R_a \leq \epsilon^2 \alpha \sum_\ell \left|
 \unorderangle{V}{\ell} \right| = \mbox{O}(\epsilon^2)$ on the complex plane.
 Hence, if eigenvalues of $U$ are non-degenerate, the eigenvalues of $U
 V^\epsilon$ for sufficiently small $\epsilon$ are given by
 \begin{equation}
  \sum_{k=1}^n e^{i \left( \unorderangle{U}{j} + \epsilon \unorderangle{V}{k}
  \right)} \left| \unorderbarebra{U}{j}\unorderket{V}{k} \right|^2 +
  \mbox{O}(\epsilon^2)
  \label{E:eigenvalue_epsilon}
 \end{equation}
 for $j=1,2,\ldots ,n$.

 Eq.~(\ref{E:eigenvalue_epsilon}) also holds even if eigenvalues of $U$ are
 degenerate.  To show this, all I need to do is to modify the above proof as
 follows.  Suppose $e^{i \unorderangle{U}{a}}$ is an $r$-fold degenerate
 eigenvalue of $U$.  And denote ${\mathcal H}$ the $r$-dimensional Hilbert
 subspace making up of the corresponding degenerate eigenvectors of $U$.  For
 $\epsilon$ sufficiently close to $0$, $V^\epsilon$ is strictly diagonally
 dominant.  Hence, the $r\times r$ submatrix $\left. V^\epsilon
 \right|_{\mathcal H}$ formed by retaining only the rows and columns of
 $V^\epsilon$ corresponding to the span of ${\mathcal H}$ is diagonalizable.
 In other words, there exists a basis $\{ \unorderket{U}{j} \}_{j=1}^n$ such
 that $\left. U \right|_{\mathcal H}$ and $\left. V^\epsilon
 \right|_{\mathcal H}$ are simultaneously diagonalized.  Regarding $U$ and
 $V^\epsilon$ as matrices in this basis, I know that
 \begin{equation}
  \unorderbra{U}{j} V^\epsilon \unorderket{U}{a} = \unorderbra{U}{a} V^\epsilon
  \unorderket{U}{j} = 0 \text{\ if\ } j\neq a \text{\ and\ }
  \unorderangle{U}{j} = \unorderangle{U}{a}
  \label{E:degenerate_demand}
 \end{equation}
 provided that $\epsilon > 0$ is sufficiently small.  I now express $U
 V^\epsilon$ in this basis and replace the diagonal matrix $F$ in
 Eq.~(\ref{E:F_def}) by
 \begin{equation}
  F_{jk} = \left\{ \begin{array}{ll}
   0 & \text{if\ } j\neq k , \\
   \epsilon \alpha & \text{if\ } j = k \text{\ and\ } \unorderangle{U}{j} =
   \unorderangle{U}{a} , \\
   1 & \text{otherwise} .
  \end{array} \right.
  \label{E:degenerate_F_def}
 \end{equation}
 Then, if $\unorderangle{U}{j} \neq \unorderangle{U}{a}$, the inequalities in
 Eqs.~(\ref{E:circle_distance_bound}) and~(\ref{E:circle_radius_bound})
 constraining the centers and radii of the $a$th and $j$th Gerschgorin circles
 of the matrix $F U V^\epsilon F^{-1}$ still apply.  Whereas in the case of
 $\unorderangle{U}{j} = \unorderangle{U}{a}$,
 \begin{widetext}
 \begin{eqnarray}
  \left| C_a - C_j \right|^2 & = & \left| \sum_{\ell=1}^n \left( \left|
   \unorderbarebra{U}{a}\unorderket{V}{\ell} \right|^2 - \left|
   \unorderbarebra{U}{j}\unorderket{V}{\ell} \right|^2 \right) e^{i \epsilon
   \unorderangle{V}{\ell}} \right|^2 \nonumber \\
  & = & \left[ \sum_{\ell=1}^n \left( \left|
   \unorderbarebra{U}{a}\unorderket{V}{\ell} \right|^2 - \left|
   \unorderbarebra{U}{j}\unorderket{V}{\ell} \right|^2 \right) \cos \left(
   \epsilon \unorderangle{V}{\ell} \right) \right]^2 + \left[ \sum_{\ell=1}^n
   \left( \left| \unorderbarebra{U}{a}\unorderket{V}{\ell} \right|^2 - \left|
   \unorderbarebra{U}{j}\unorderket{V}{\ell} \right|^2 \right) \sin \left(
   \epsilon \unorderangle{V}{\ell} \right) \right]^2 \nonumber \\
  & = & -2 \sum_{\ell\neq\ell'} \left( \left|
   \unorderbarebra{U}{a}\unorderket{V}{\ell} \right|^2 - \left|
   \unorderbarebra{U}{j}\unorderket{V}{\ell} \right|^2 \right) \left( \left|
   \unorderbarebra{U}{a}\unorderket{V}{\ell'} \right|^2 - \left|
   \unorderbarebra{U}{j}\unorderket{V}{\ell'} \right|^2 \right) \sin^2 \left[
   \frac{\epsilon \left( \unorderangle{V}{\ell} - \unorderangle{V}{\ell'}
   \right)}{2} \right] .
  \label{E:degenerate_circle_distance_bound1}
 \end{eqnarray}
 By Taylor's formula with remainder,
 \begin{equation}
  |C_a - C_j| = B_j \epsilon + \mbox{O}(\epsilon^3)
  \label{E:degenerate_circle_distance_bound2}
 \end{equation}
 where
 \begin{equation}
  B_j = \left[ \frac{-1}{2} \sum_{\ell\neq\ell'} \left( \left|
  \unorderbarebra{U}{a}\unorderket{V}{\ell} \right|^2 - \left|
  \unorderbarebra{U}{j}\unorderket{V}{\ell} \right|^2 \right) \left( \left|
  \unorderbarebra{U}{a}\unorderket{V}{\ell'} \right|^2 - \left|
  \unorderbarebra{U}{j}\unorderket{V}{\ell'} \right|^2 \right) \left(
  \unorderangle{V}{\ell} - \unorderangle{V}{\ell'} \right)^2 \right]^{1/2} .
  \label{E:B_j_def}
 \end{equation}
 \end{widetext}
 It is important to note that $B_j > 0$ if $C_a \neq C_j$ and that the
 magnitude of the $\mbox{O} (\epsilon^3)$ remainder term in
 Eq.~(\ref{E:degenerate_circle_distance_bound2}) is less than or equal to that
 of the $B_j \epsilon$ term provided that $\epsilon$ is sufficiently close to
 $0$.  From Eqs.~(\ref{E:degenerate_demand}) and~(\ref{E:degenerate_F_def}),
 the radius of the $j$th Gerschgorin circle of the matrix $F U V^\epsilon
 F^{-1}$ obeys
 \begin{equation}
  R_j \leq \epsilon^2 \alpha (n - r) \sum_{\ell=1}^n \left|
  \unorderangle{V}{\ell} \right| < \epsilon^2 \alpha n \sum_{\ell=1}^n \left|
  \unorderangle{V}{\ell} \right| \text{\ if\ } \unorderangle{U}{j} =
  \unorderangle{U}{a} .
  \label{E:degenerate_circle_radius_bound}
 \end{equation}
 Suppose the set $\{ j \colon C_a = C_j \}$ has $r' \leq r$ elements.  Then,
 by choosing $\alpha, \epsilon > 0$ satisfying Eq.~(\ref{E:epsilon_value}),
 \begin{equation}
  \alpha \geq \max_{j\colon \unorderangle{U}{j} \neq \unorderangle{U}{a}}
  \frac{2 \sum_{\ell=1}^n \left| \unorderangle{V}{\ell} \right|}{\left| \exp
  \left( i \unorderangle{U}{a} \right) - \exp \left( i \unorderangle{U}{j}
  \right) \right|}
  \label{E:degenerate_alpha_value}
 \end{equation}
 and
 \begin{equation}
  4\epsilon \alpha n \sum_{\ell=1}^n \left| \unorderangle{V}{\ell} \right|
  \leq \min_{j\colon \unorderangle{U}{j} = \unorderangle{U}{a} \text{\ and\ }
  B_j \neq 0} B_j ,
  \label{E:degenerate_additional_value}
 \end{equation}
 the $r'$ Gerschgorin circles with a common center $C_a$ and each with
 $\mbox{O}(\epsilon^2)$ radius will be disjointed from the rest of the
 Gerschgorin circles.  Hence, by Gerschgorin circle theorem, there are exactly
 $r'$ eigenvalues of $U V^\epsilon$ each obeying
 Eq.~(\ref{E:eigenvalue_epsilon}).  Hence, Eq.~(\ref{E:eigenvalue_epsilon})
 also holds for the degenerate eigenvalue case.  I also remark that
 Eq.~(\ref{E:eigenvalue_epsilon}) resembles the Rayleigh-Schr\"{o}dinger series
 truncated at the $\epsilon^2$ terms for time-independent perturbation of
 Hermitian operators.

 The argument of Eq.~(\ref{E:eigenvalue_epsilon}) equals $\unorderangle{U}{j} +
 \arg \left[ \sum_k e^{i \epsilon \unorderangle{V}{k}} \left|
 \unorderbarebra{U}{j}\unorderket{V}{k} \right|^2 + \mbox{O}(\epsilon^2)
 \right]$.  In order words, the arguments of the eigenvalues of $U V^\epsilon$
 obey
 \begin{equation}
  \unorderangle{U V^\epsilon}{j} = \unorderangle{U}{j} + \epsilon \sum_{k=1}^n
  \unorderangle{V}{k} \left| \unorderbarebra{U}{j}\unorderket{V}{k} \right|^2 +
  \mbox{O}(\epsilon^2) \bmod 2\pi .
  \label{E:angle_relation}
 \end{equation}
 Since $\epsilon$ is a small positive number and all arguments are written in
 their principle values, Eq.~(\ref{E:angle_relation}) implies
 \begin{eqnarray}
  \left| \unorderangle{U V^\epsilon}{j} \right| & \leq & \left|
   \unorderangle{U}{j} + \epsilon \sum_{k=1}^n \unorderangle{V}{k} \left|
   \unorderbarebra{U}{j}\unorderket{V}{k} \right|^2 \right| +
   \mbox{O}(\epsilon^2) \nonumber \\
  & \leq & \left| \unorderangle{U}{j} \right| + \epsilon \sum_{k=1}^n \left|
   \unorderangle{V}{k} \right| \,\left| \unorderbarebra{U}{j}\unorderket{V}{k}
   \right|^2 + \mbox{O}(\epsilon^2) .
  \label{E:angle_inequality}
 \end{eqnarray}
 Note that for sufficiently small $\epsilon > 0$, the equality in the first
 line holds if and only if $\left| \unorderangle{U}{j} + \epsilon \sum_k
 \unorderangle{V}{k} \left| \unorderbarebra{U}{j}\unorderket{V}{k} \right|^2
 \right| \leq \pi$.  Now, applying the eigenvalue perturbation and stability
 results for the sum of two diagonalizable matrices in Ref.~\cite{na_book},
 I conclude that the eigenvalues of the positive semi-definite Hermitian matrix
 \begin{eqnarray}
  & & \tilde{H}(U,V,\epsilon) \nonumber \\
  & \equiv & \sum_{j=1}^n \left( \left| \unorderangle{U}{j} \right|
   \,\unorderket{U}{j}\unorderbra{U}{j} + \epsilon \left| \unorderangle{V}{j}
   \right| \,\unorderket{V}{j}\unorderbra{V}{j} \right) \nonumber \\
  & = & \sum_{j=1}^n \left[ \desabsangle{U}{j} \,\desket{U}{j}\desbra{U}{j} +
   \epsilon \desabsangle{V}{j} \,\desket{V}{j}\desbra{V}{j} \right] \nonumber
   \\
  & \equiv & \tilde{H}_a(U) + \epsilon \tilde{H}_b(V)
  \label{E:H_epsilon}
 \end{eqnarray}
 are precisely those given in the last line of Eq.~(\ref{E:angle_inequality}).
 (Similar to the above analysis for the degenerate eigenvalue case, this result
 is also true when eigenvalues of $\tilde{H}_a(U)$ are degenerate.  The key of
 the proof is to carefully pick a basis so that $\tilde{H}_a(U)$ and
 $\tilde{H}_b(V)$ are simultaneously diagonalized for every degenerate
 subspace of $\tilde{H}_a(U)$.)  Combined with Eq.~(\ref{E:angle_inequality}),
 I conclude that
 \begin{equation}
  \desabsangle{U V^\epsilon}{j} \leq \deseig{\tilde{H}(U,V,\epsilon)}{j} +
  \mbox{O} (\epsilon^2)
  \label{E:eigenvalue_dominance}
 \end{equation}
 for $j = 1,2,\cdots , n$.

 According to Corollary~6.6 in Ref.~\cite{eigenvalue_book},
 \begin{eqnarray}
  \sum_{j=1}^m \desabsangle{U V^\epsilon}{j} & \leq & \sum_{j=1}^m
   \deseig{\tilde{H}(U,V,\epsilon)}{j} + \mbox{O}(\epsilon^2) \nonumber \\
  & \leq & \sum_{j=1}^m \left[ \deseig{\tilde{H}_a(U)}{j} + \deseig{\epsilon
   \tilde{H}_b(V)}{j} \right] + \mbox{O}(\epsilon^2) \nonumber \\
  & = & \sum_{j=1}^m \left[ \desabsangle{U}{j} + \epsilon \desabsangle{V}{j}
   \right] + \mbox{O}(\epsilon^2) .
  \label{E:hermitian_analog}
 \end{eqnarray}
 Interestingly, this inequality is the Hermitian analogy of what I need to
 prove here.

 Now, by iteratively applying Eq.~(\ref{E:hermitian_analog}) to $U V = U
 \overbrace{V^{1/q} V^{1/q} \cdots V^{1/q}}^{q \textrm{ terms}}$, I get
 \begin{equation}
  \sum_{j=1}^m \desabsangle{U V}{j} \leq \sum_{j=1}^m \left[ \desabsangle{U}{j}
  + \desabsangle{V}{j} \right] + \mbox{O}(\frac{1}{q}) .
  \label{E:final_inequality}
 \end{equation}
 By taking the limit $q\rightarrow +\infty$, I obtain the second inequality in
 Eq.~(\ref{E:UV_inequality}).
\end{proof}

\begin{remark}
 Note that for each $j = 1,2,\ldots, n$, the equality of
 Eq.~(\ref{E:angle_inequality}) holds if and only if $\unorderangle{U
 V^\epsilon}{j}$, $\unorderangle{U}{j}$ and $\unorderangle{V}{k}$ are all
 non-negative or non-positive for all $k$ whenever
 $\unorderbarebra{U}{j}\unorderket{V}{k} \neq 0$.  And from the proof of
 Corollary~6.6 in Ref.~\cite{eigenvalue_book}, the equality of
 Eq.~(\ref{E:hermitian_analog}) holds if and only if the spans of $\{
 \deseigket{\tilde{H}(U,V,\epsilon)}{j} \}_{j=1}^m$, $\{ \deseigket{\tilde{H}_a
 (U)}{j} \}_{j=1}^m$ and $\{ \deseigket{\tilde{H}_b(V)}{j} \}_{j=1}^m$ agree.
 Suppose the vector $\vec{\mu}$ consists of $r$ distinct $\mu_j$'s.  Then using
 the above two observations and by induction on $r$, one can prove the
 following necessary and sufficient conditions for the second inequality in
 Eq.~(\ref{E:UV_inequality}) of Theorem~\ref{Thrm:angle_bound} to become an
 equality.  The detailed proof is left to interested readers.
 \begin{enumerate}
  \item The $n$-dimensional Hilbert space ${\mathcal H}$ on which $U, V$ and
   $U V$ act can be written as the direct sum $\bigoplus_{j=1}^r
   {\mathcal H}_j$.  Moreover $U$, $V$ and $U V$ are simultaneously block
   diagonalized with respected to this direct sum decomposition of
   ${\mathcal H}$.  That is to say, $\langle\phi|U|\psi\rangle = 0$ whenever
   $|\phi\rangle$ and $|\psi\rangle$ belong to different ${\mathcal H}_j$'s.
   And similarly for $V$ and $U V$.
  \item The ordering of absolute values of the arguments of eigenvalue of $U,
   V$ and $U V$ respects the direct sum decomposition of ${\mathcal H}$ in the
   sense that $\desabsangle{\left. U\right|_{{\mathcal H}_j}}{k} \geq
   \desabsangle{\left. U\right|_{{\mathcal H}_{j'}}}{k'}$ for all $k, k'$
   whenever $j > j'$.  And similarly for $V$ and $U V$.  Furthermore, when
   calculating $\enorm{\cdot}$ using Eq.~(\ref{E:norm_def}), the absolute
   values of the arguments of the eigenvalue in each of the corresponding
   diagonal blocks in $U, V$ and $U V$ are associated with the same value of
   $\mu_j$.
  \item If the $\mu_j$ corresponding to ${\mathcal H}_j$ is non-zero, then
   $\left. U\right|_{{\mathcal H}_j}$, $\left. V\right|_{{\mathcal H}_j}$ and
   $\left. U V\right|_{{\mathcal H}_j}$ can be further simultaneously block
   diagonalized with respected to the direct sum decomposition ${\mathcal H}_j
   = {\mathcal H}_j^+ \oplus {\mathcal H}_j^-$.  Furthermore, $\desangle{\left.
   U\right|_{{\mathcal H}_j^+}}{k}$, $\desangle{\left. V
   \right|_{{\mathcal H}_j^+}}{k}$ and $\desangle{\left. U
   V\right|_{{\mathcal H}_j^+}}{k} \geq 0$ for all $k$; while $\desangle{\left.
   U\right|_{{\mathcal H}_j^-}}{k}$, $\desangle{\left. V
   \right|_{{\mathcal H}_j^-}}{k}$ and $\desangle{\left. U V
   \right|_{{\mathcal H}_j^-}}{k} \leq 0$ for all $k$.
 \end{enumerate}

 Since the number of independent conditions for equality in
 Eq.~(\ref{E:UV_inequality}) increases with the dimension $n$ of the unitary
 matrices as well as the value $m$ used in $\spcenorm{\cdot}{m}$, I conclude
 that as $n$ or $m$ increases, it is harder and harder for the equality to hold
 provided that the unitary matrices are drawn randomly from the Haar measure of
 $U(n)$.  I verify this assertion in Fig.~\ref{F:enorm}, which shows the plots
 of $\spcenorm{U V}{m}$ against $\spcenorm{U}{m} + \spcenorm{V}{m}$.
 \label{Rem:equality_conditions}
\end{remark}

\begin{acknowledgments}
 I thank F.~K. Chow, Y.~T. Lam, K.~Y. Lee, H.-K. Lo and in particular
 C.~H.~F. Fung for their enlightening discussions.  This work is supported by
 the RGC grant number HKU~700709P of the HKSAR Government.
\end{acknowledgments}

\bibliography{qc51.2}

\begin{thebibliography}{27}%
\makeatletter
\providecommand \@ifxundefined [1]{%
 \@ifx{#1\undefined}
}%
\providecommand \@ifnum [1]{%
 \ifnum #1\expandafter \@firstoftwo
 \else \expandafter \@secondoftwo
 \fi
}%
\providecommand \@ifx [1]{%
 \ifx #1\expandafter \@firstoftwo
 \else \expandafter \@secondoftwo
 \fi
}%
\providecommand \natexlab [1]{#1}%
\providecommand \enquote  [1]{``#1''}%
\providecommand \bibnamefont  [1]{#1}%
\providecommand \bibfnamefont [1]{#1}%
\providecommand \citenamefont [1]{#1}%
\providecommand \href@noop [0]{\@secondoftwo}%
\providecommand \href [0]{\begingroup \@sanitize@url \@href}%
\providecommand \@href[1]{\@@startlink{#1}\@@href}%
\providecommand \@@href[1]{\endgroup#1\@@endlink}%
\providecommand \@sanitize@url [0]{\catcode `\\12\catcode `\$12\catcode
  `\&12\catcode `\#12\catcode `\^12\catcode `\_12\catcode `\%12\relax}%
\providecommand \@@startlink[1]{}%
\providecommand \@@endlink[0]{}%
\providecommand \url  [0]{\begingroup\@sanitize@url \@url }%
\providecommand \@url [1]{\endgroup\@href {#1}{\urlprefix }}%
\providecommand \urlprefix  [0]{URL }%
\providecommand \Eprint [0]{\href }%
\@ifxundefined \urlstyle {%
  \providecommand \doi  [0]{\begingroup \@sanitize@url \@doi}%
  \providecommand \@doi [1]{\endgroup \@@startlink {\doibase
  #1}doi:\discretionary {}{}{}#1\@@endlink }%
}{%
  \providecommand \doi  [0]{doi:\discretionary{}{}{}\begingroup
  \urlstyle{rm}\Url }%
}%
\providecommand \doibase [0]{http://dx.doi.org/}%
\providecommand \Doi [0]{\begingroup \@sanitize@url \@Doi }%
\providecommand \@Doi  [1]{\endgroup\@@startlink{\doibase#1}\@@Doi}%
\providecommand \@@Doi [1]{#1\@@endlink}%
\providecommand \selectlanguage [0]{\@gobble}%
\providecommand \bibinfo  [0]{\@secondoftwo}%
\providecommand \bibfield  [0]{\@secondoftwo}%
\providecommand \translation [1]{[#1]}%
\providecommand \BibitemOpen [0]{}%
\providecommand \bibitemStop [0]{}%
\providecommand \bibitemNoStop [0]{.\EOS\space}%
\providecommand \EOS [0]{\spacefactor3000\relax}%
\providecommand \BibitemShut  [1]{\csname bibitem#1\endcsname}%
\bibitem [{\citenamefont {Bengtsson}\ and\ \citenamefont
  {\.{Z}yczkowski}(2006)}]{geometry_of_states_book}%
  \BibitemOpen
  \bibfield  {author} {\bibinfo {author} {\bibfnamefont {I.}~\bibnamefont
  {Bengtsson}}\ and\ \bibinfo {author} {\bibfnamefont {K.}~\bibnamefont
  {\.{Z}yczkowski}},\ }\href@noop {} {\emph {\bibinfo {title} {Geometry Of
  Quantum States}}}\ (\bibinfo  {publisher} {CUP},\ \bibinfo {address}
  {Cambridge, U.K.},\ \bibinfo {year} {2006})\BibitemShut {NoStop}%
\bibitem [{\citenamefont {Ashikhmin}\ \emph {et~al.}(2001)\citenamefont
  {Ashikhmin}, \citenamefont {Litsyn},\ and\ \citenamefont
  {Tsfasman}}]{AG_code}%
  \BibitemOpen
  \bibfield  {author} {\bibinfo {author} {\bibfnamefont {A.}~\bibnamefont
  {Ashikhmin}}, \bibinfo {author} {\bibfnamefont {S.}~\bibnamefont {Litsyn}}, \
  and\ \bibinfo {author} {\bibfnamefont {M.~A.}\ \bibnamefont {Tsfasman}},\
  }\href@noop {} {\bibfield  {journal} {\bibinfo  {journal} {Phys. Rev. A},\
  }\textbf {\bibinfo {volume} {63}},\ \bibinfo {pages} {032311} (\bibinfo
  {year} {2001})}\BibitemShut {NoStop}%
\bibitem [{\citenamefont {Nielsen}\ \emph {et~al.}(2006)\citenamefont
  {Nielsen}, \citenamefont {Dowling}, \citenamefont {Gu},\ and\ \citenamefont
  {Doherty}}]{geometric_computing}%
  \BibitemOpen
  \bibfield  {author} {\bibinfo {author} {\bibfnamefont {M.~A.}\ \bibnamefont
  {Nielsen}}, \bibinfo {author} {\bibfnamefont {M.~R.}\ \bibnamefont
  {Dowling}}, \bibinfo {author} {\bibfnamefont {M.}~\bibnamefont {Gu}}, \ and\
  \bibinfo {author} {\bibfnamefont {A.~C.}\ \bibnamefont {Doherty}},\
  }\href@noop {} {\bibfield  {journal} {\bibinfo  {journal} {Science},\
  }\textbf {\bibinfo {volume} {311}},\ \bibinfo {pages} {1133} (\bibinfo {year}
  {2006})}\BibitemShut {NoStop}%
\bibitem [{\citenamefont {Johnston}\ and\ \citenamefont
  {Kribs}(2010)}]{qip_operator_norm}%
  \BibitemOpen
  \bibfield  {author} {\bibinfo {author} {\bibfnamefont {N.}~\bibnamefont
  {Johnston}}\ and\ \bibinfo {author} {\bibfnamefont {D.~W.}\ \bibnamefont
  {Kribs}},\ }\href@noop {} {\bibfield  {journal} {\bibinfo  {journal} {J.
  Math. Phys.},\ }\textbf {\bibinfo {volume} {51}},\ \bibinfo {pages} {082202}
  (\bibinfo {year} {2010})}\BibitemShut {NoStop}%
\bibitem [{\citenamefont {Johnston}\ and\ \citenamefont
  {Kribs}(2011)}]{further_qip_operator_norm}%
  \BibitemOpen
  \bibfield  {author} {\bibinfo {author} {\bibfnamefont {N.}~\bibnamefont
  {Johnston}}\ and\ \bibinfo {author} {\bibfnamefont {D.~W.}\ \bibnamefont
  {Kribs}},\ }\href@noop {} {\bibfield  {journal} {\bibinfo  {journal} {Quant.
  Inform. \& Comp.},\ }\textbf {\bibinfo {volume} {11}},\ \bibinfo {pages}
  {104} (\bibinfo {year} {2011})}\BibitemShut {NoStop}%
\bibitem [{\citenamefont {Rastegin}(2010)}]{singular_value_trace_distance}%
  \BibitemOpen
  \bibfield  {author} {\bibinfo {author} {\bibfnamefont {A.~E.}\ \bibnamefont
  {Rastegin}},\ }\href@noop {} {\bibfield  {journal} {\bibinfo  {journal}
  {Quant. Inf. Proc.},\ }\textbf {\bibinfo {volume} {9}},\ \bibinfo {pages}
  {61} (\bibinfo {year} {2010})}\BibitemShut {NoStop}%
\bibitem [{\citenamefont
  {Rastegin}(2011)}]{more_singular_value_trace_distance}%
  \BibitemOpen
  \bibfield  {author} {\bibinfo {author} {\bibfnamefont {A.~E.}\ \bibnamefont
  {Rastegin}},\ }\href@noop {} {\bibfield  {journal} {\bibinfo  {journal}
  {Quant. Inf. Proc.},\ }\textbf {\bibinfo {volume} {10}},\ \bibinfo {pages}
  {123} (\bibinfo {year} {2011})}\BibitemShut {NoStop}%
\bibitem [{\citenamefont {Nielsen}\ and\ \citenamefont
  {Chuang}(2000)}]{Nielsen_Chuang}%
  \BibitemOpen
  \bibfield  {author} {\bibinfo {author} {\bibfnamefont {M.~A.}\ \bibnamefont
  {Nielsen}}\ and\ \bibinfo {author} {\bibfnamefont {I.~L.}\ \bibnamefont
  {Chuang}},\ }\href@noop {} {\emph {\bibinfo {title} {Quantum Computation And
  Quantum Information}}}\ (\bibinfo  {publisher} {CUP},\ \bibinfo {address}
  {Cambridge, U,K.},\ \bibinfo {year} {2000})\ Chap.~\bibinfo {chapter}
  {9}\BibitemShut {NoStop}%
\bibitem [{\citenamefont {Nudel'man}\ and\ \citenamefont
  {\u{S}varcman}(1958)}]{convex_hull_bound}%
  \BibitemOpen
  \bibfield  {author} {\bibinfo {author} {\bibfnamefont {A.~A.}\ \bibnamefont
  {Nudel'man}}\ and\ \bibinfo {author} {\bibfnamefont {P.~A.}\ \bibnamefont
  {\u{S}varcman}},\ }\href@noop {} {\bibfield  {journal} {\bibinfo  {journal}
  {Uspehi Math. Nauk.},\ }\textbf {\bibinfo {volume} {13}},\ \bibinfo {pages}
  {111} (\bibinfo {year} {1958})}\BibitemShut {NoStop}%
\bibitem [{\citenamefont {Thompson}(1974)}]{direct_extension}%
  \BibitemOpen
  \bibfield  {author} {\bibinfo {author} {\bibfnamefont {R.~C.}\ \bibnamefont
  {Thompson}},\ }\href@noop {} {\bibfield  {journal} {\bibinfo  {journal} {Lin.
  \& Multilin. Alg.},\ }\textbf {\bibinfo {volume} {2}},\ \bibinfo {pages} {13}
  (\bibinfo {year} {1974})}\BibitemShut {NoStop}%
\bibitem [{\citenamefont {Agnihotri}\ and\ \citenamefont
  {Woodward}(1998)}]{q_calculus}%
  \BibitemOpen
  \bibfield  {author} {\bibinfo {author} {\bibfnamefont {S.}~\bibnamefont
  {Agnihotri}}\ and\ \bibinfo {author} {\bibfnamefont {C.}~\bibnamefont
  {Woodward}},\ }\href@noop {} {\bibfield  {journal} {\bibinfo  {journal}
  {Math. Res. Lett.},\ }\textbf {\bibinfo {volume} {5}},\ \bibinfo {pages}
  {817} (\bibinfo {year} {1998})}\BibitemShut {NoStop}%
\bibitem [{\citenamefont {Childs}\ \emph {et~al.}(2000)\citenamefont {Childs},
  \citenamefont {Preskill},\ and\ \citenamefont {Renes}}]{principal_bound}%
  \BibitemOpen
  \bibfield  {author} {\bibinfo {author} {\bibfnamefont {A.~M.}\ \bibnamefont
  {Childs}}, \bibinfo {author} {\bibfnamefont {J.}~\bibnamefont {Preskill}}, \
  and\ \bibinfo {author} {\bibfnamefont {J.}~\bibnamefont {Renes}},\
  }\href@noop {} {\bibfield  {journal} {\bibinfo  {journal} {J. Mod. Opt.},\
  }\textbf {\bibinfo {volume} {47}},\ \bibinfo {pages} {155} (\bibinfo {year}
  {2000})}\BibitemShut {NoStop}%
\bibitem [{\citenamefont {Bhatia}(1997)}]{matrix_analysis_book}%
  \BibitemOpen
  \bibfield  {author} {\bibinfo {author} {\bibfnamefont {R.}~\bibnamefont
  {Bhatia}},\ }\href@noop {} {\emph {\bibinfo {title} {Matrix Analysis}}}\
  (\bibinfo  {publisher} {Springer},\ \bibinfo {address} {Berlin, Germany},\
  \bibinfo {year} {1997})\BibitemShut {NoStop}%
\bibitem [{\citenamefont {Bhatia}(2007)}]{eigenvalue_book}%
  \BibitemOpen
  \bibfield  {author} {\bibinfo {author} {\bibfnamefont {R.}~\bibnamefont
  {Bhatia}},\ }\href@noop {} {\emph {\bibinfo {title} {Perturbation Bounds For
  Matrix Eigenvalues}}}\ (\bibinfo  {publisher} {SIAM},\ \bibinfo {address}
  {Philadelphia, USA},\ \bibinfo {year} {2007})\BibitemShut {NoStop}%
\bibitem [{\citenamefont {Knutson}\ and\ \citenamefont
  {Tao}(2001)}]{Weyl_review}%
  \BibitemOpen
  \bibfield  {author} {\bibinfo {author} {\bibfnamefont {A.}~\bibnamefont
  {Knutson}}\ and\ \bibinfo {author} {\bibfnamefont {T.}~\bibnamefont {Tao}},\
  }\href@noop {} {\bibfield  {journal} {\bibinfo  {journal} {Notices A. M.
  S.},\ }\textbf {\bibinfo {volume} {48}},\ \bibinfo {pages} {175} (\bibinfo
  {year} {2001})}\BibitemShut {NoStop}%
\bibitem [{\citenamefont {Lloyd}(2000)}]{phy_limit}%
  \BibitemOpen
  \bibfield  {author} {\bibinfo {author} {\bibfnamefont {S.}~\bibnamefont
  {Lloyd}},\ }\href@noop {} {\bibfield  {journal} {\bibinfo  {journal}
  {Nature},\ }\textbf {\bibinfo {volume} {406}},\ \bibinfo {pages} {1047}
  (\bibinfo {year} {2000})}\BibitemShut {NoStop}%
\bibitem [{\citenamefont {Chau}(2010){\natexlab{a}}}]{preprint}%
  \BibitemOpen
  \bibfield  {author} {\bibinfo {author} {\bibfnamefont {H.~F.}\ \bibnamefont
  {Chau}},\ }\href@noop {} {\bibfield  {journal} {\bibinfo  {journal} {Phys.
  Rev. A},\ }\textbf {\bibinfo {volume} {81}},\ \bibinfo {pages} {062133}
  (\bibinfo {year} {2010}{\natexlab{a}})}\BibitemShut {NoStop}%
\bibitem [{\citenamefont {Margolus}\ and\ \citenamefont
  {Levitin}(1996)}]{PHYSCOMP96}%
  \BibitemOpen
  \bibfield  {author} {\bibinfo {author} {\bibfnamefont {N.}~\bibnamefont
  {Margolus}}\ and\ \bibinfo {author} {\bibfnamefont {L.~B.}\ \bibnamefont
  {Levitin}},\ }in\ \href@noop {} {\emph {\bibinfo {booktitle} {Proceedings of
  the fourth workshop on physics and computation}}},\ \bibinfo {editor} {edited
  by\ \bibinfo {editor} {\bibfnamefont {T.}~\bibnamefont {Toffoli}}, \bibinfo
  {editor} {\bibfnamefont {B.}~\bibnamefont {Biafore}}, \ and\ \bibinfo
  {editor} {\bibfnamefont {J.}~\bibnamefont {{Le\~{a}o}}}}\ (\bibinfo
  {publisher} {New England Complex Systems Institute},\ \bibinfo {address}
  {Boston, U.S.A.},\ \bibinfo {year} {1996})\ pp.\ \bibinfo {pages}
  {208--211}\BibitemShut {NoStop}%
\bibitem [{\citenamefont {Margolus}\ and\ \citenamefont
  {Levitin}(1998)}]{max_speed}%
  \BibitemOpen
  \bibfield  {author} {\bibinfo {author} {\bibfnamefont {N.}~\bibnamefont
  {Margolus}}\ and\ \bibinfo {author} {\bibfnamefont {L.~B.}\ \bibnamefont
  {Levitin}},\ }\href@noop {} {\bibfield  {journal} {\bibinfo  {journal}
  {Physica D},\ }\textbf {\bibinfo {volume} {120}},\ \bibinfo {pages} {188}
  (\bibinfo {year} {1998})}\BibitemShut {NoStop}%
\bibitem [{\citenamefont {Uhlmann}(1992)}]{Uhlmann}%
  \BibitemOpen
  \bibfield  {author} {\bibinfo {author} {\bibfnamefont {A.}~\bibnamefont
  {Uhlmann}},\ }\href@noop {} {\bibfield  {journal} {\bibinfo  {journal} {Phys.
  Lett. A},\ }\textbf {\bibinfo {volume} {161}},\ \bibinfo {pages} {329}
  (\bibinfo {year} {1992})}\BibitemShut {NoStop}%
\bibitem [{\citenamefont {Curcker}\ and\ \citenamefont
  {Corbalan}(1989)}]{polynomial_roots}%
  \BibitemOpen
  \bibfield  {author} {\bibinfo {author} {\bibfnamefont {F.}~\bibnamefont
  {Curcker}}\ and\ \bibinfo {author} {\bibfnamefont {A.~G.}\ \bibnamefont
  {Corbalan}},\ }\href@noop {} {\bibfield  {journal} {\bibinfo  {journal}
  {Amer. Math. Monthly},\ }\textbf {\bibinfo {volume} {96}},\ \bibinfo {pages}
  {342} (\bibinfo {year} {1989})}\BibitemShut {NoStop}%
\bibitem [{\citenamefont {Tyrtyshnikov}(1997)}]{polynomial_book}%
  \BibitemOpen
  \bibfield  {author} {\bibinfo {author} {\bibfnamefont {E.~E.}\ \bibnamefont
  {Tyrtyshnikov}},\ }\href@noop {} {\emph {\bibinfo {title} {A Brief
  Introduction To Numerical Analysis}}}\ (\bibinfo  {publisher}
  {Birkh\"{a}user},\ \bibinfo {address} {Boston, U.S.A.},\ \bibinfo {year}
  {1997})\ \bibinfo {note} {\uppercase{T}heorem~3.9.1}\BibitemShut {NoStop}%
\bibitem [{\citenamefont {Li}\ \emph {et~al.}(1984)\citenamefont {Li},
  \citenamefont {Tam},\ and\ \citenamefont {Tsing}}]{generalized_Ky_Fan_norm}%
  \BibitemOpen
  \bibfield  {author} {\bibinfo {author} {\bibfnamefont {C.-K.}\ \bibnamefont
  {Li}}, \bibinfo {author} {\bibfnamefont {T.-Y.}\ \bibnamefont {Tam}}, \ and\
  \bibinfo {author} {\bibfnamefont {N.-K.}\ \bibnamefont {Tsing}},\ }\href@noop
  {} {\bibfield  {journal} {\bibinfo  {journal} {Lin. \& Multilin. Alg.},\
  }\textbf {\bibinfo {volume} {16}},\ \bibinfo {pages} {215} (\bibinfo {year}
  {1984})}\BibitemShut {NoStop}%
\bibitem [{\citenamefont {Lam}(2009)}]{lam_thesis}%
  \BibitemOpen
  \bibfield  {author} {\bibinfo {author} {\bibfnamefont {Y.~T.}\ \bibnamefont
  {Lam}},\ }\href@noop {} {\emph {\bibinfo {title} {Quantum information}}},\
  \bibinfo {type} {Final year project report}\ (\bibinfo  {institution} {Dept.
  of Physics, Univ. of Hong Kong},\ \bibinfo {year} {2009})\BibitemShut
  {NoStop}%
\bibitem [{\citenamefont {Chau}\ and\ \citenamefont
  {Lam}(2011)}]{elementary_proof}%
  \BibitemOpen
  \bibfield  {author} {\bibinfo {author} {\bibfnamefont {H.~F.}\ \bibnamefont
  {Chau}}\ and\ \bibinfo {author} {\bibfnamefont {Y.~T.}\ \bibnamefont {Lam}},\
  }\href@noop {} {\bibfield  {journal} {\bibinfo  {journal} {J. Inequal. \&
  Appl.}} (\bibinfo {year} {2011})},\ \bibinfo {note} {to appear},\ \Eprint
  {http://arxiv.org/abs/1006.3978} {arXiv:1006.3978} \BibitemShut {NoStop}%
\bibitem [{\citenamefont {Chau}(2010){\natexlab{b}}}]{original_article}%
  \BibitemOpen
  \bibfield  {author} {\bibinfo {author} {\bibfnamefont {H.~F.}\ \bibnamefont
  {Chau}},\ }\href@noop {} {\enquote {\bibinfo {title} {Metrics on unitary
  matrices, bounds on eigenvalues of product of unitary matrices, and measures
  of non-commutativity between two unitary matrices},}\ } (\bibinfo {year}
  {2010}{\natexlab{b}}),\ \Eprint {http://arxiv.org/abs/1006.3614}
  {arXiv:1006.3614} \BibitemShut {NoStop}%
\bibitem [{\citenamefont {Atkinson}(1978)}]{na_book}%
  \BibitemOpen
  \bibfield  {author} {\bibinfo {author} {\bibfnamefont {K.~E.}\ \bibnamefont
  {Atkinson}},\ }\href@noop {} {\emph {\bibinfo {title} {An Introduction To
  Numerical Analysis}}}\ (\bibinfo  {publisher} {Wiley},\ \bibinfo {address}
  {N.Y., U.S.A.},\ \bibinfo {year} {1978})\ \bibinfo {note} {pp.~509--512, in
  particular \uppercase{E}q.~(9.27).}\BibitemShut {Stop}%
\end{thebibliography}%
\end{document}